\renewcommand{\geq}{\geqslant}
\renewcommand{\leq}{\leqslant}
\renewcommand{\ge}{\geq}
\renewcommand{\le}{\leq}
\newcommand{\rajendra}[1]{\todo[color=blue!20]{RB: #1}}
\newcommand{\stephane}[1]{\todo{SG: #1}}
\newtheorem{thm}{Theorem}
\newtheorem{lem}[thm]{Lemma}
\newtheorem{prop}[thm]{Proposition}
\theoremstyle{definition}
\newtheorem*{definition}{Definition}
\newcommand{\R}{\mathbb{R}}
\newcommand{\dom}{\operatorname{dom}}
\newcommand{\closure}{\operatorname{clo}}
\newcommand{\interior}{\operatorname{int}}
\newcommand{\boundary}{\operatorname{bdry}}
\newcommand{\vertiii}[1]{{\left\vert\kern-0.25ex\left\vert\kern-0.25ex\left\vert
#1 \right\vert\kern-0.25ex\right\vert\kern-0.25ex\right\vert}}
\newcommand{\positivereals}{\mathbb{R}_{++}}
\newcommand{\nonnegativereals}{\mathbb{R}_{+}}
\def \d   {\text {\rm d}}
\def \e   {\text {\rm e}}
\def \min   {\text {\rm min}}
\newcommand{\diag}{\operatorname{diag}}
\def \lim   {\text {\rm lim}}
\def \tr    {\text {\rm tr}}
\def \argmin {\text {\rm argmin}}
\begin{document}

\title[]{Matrix versions of the Hellinger distance}

\author[Rajendra Bhatia]{Rajendra Bhatia}
\address{Ashoka University, Sonepat\\ Haryana, 131029, India}

\email{rajendra.bhatia@ashoka.edu.in}
	
\author[Stephane Gaubert]{Stephane Gaubert}

\address{INRIA and CMAP, Ecole Polytechnique, CNRS, 91128\\ Palaiseau, France}

\email{Stephane.Gaubert@inria.fr}

\author[Tanvi Jain]{Tanvi Jain}
\address{Indian Statistical Institute\\ New Delhi 110016, India}
\email{tanvi@isid.ac.in}

\begin{abstract}
On the space of positive definite matrices we consider distance functions of the form
$d(A,B)=\left[\tr\mathcal{A}(A,B)-\tr\mathcal{G}(A,B)\right]^{1/2},$
where $\mathcal{A}(A,B)$ is the arithmetic mean and $\mathcal{G}(A,B)$ is one of the different versions of the geometric mean.
When $\mathcal{G}(A,B)=A^{1/2}B^{1/2}$ this distance is
$\|A^{1/2}-B^{1/2}\|_2,$ and when $\mathcal{G}(A,B)=(A^{1/2}BA^{1/2})^{1/2}$
it is the Bures-Wasserstein metric.
We study two other cases:
$\mathcal{G}(A,B)=A^{1/2}(A^{-1/2}BA^{-1/2})^{1/2}A^{1/2},$ the Pusz-Woronowicz geometric mean,
and $\mathcal{G}(A,B)=\exp\big(\frac{\log A+\log B}{2}\big),$
the log Euclidean mean.
With these choices $d(A,B)$ is no longer a metric,
but it turns out that $d^2(A,B)$ is a divergence.
We establish some (strict) convexity properties of these divergences.
We obtain characterisations of barycentres
of $m$ positive definite matrices
with respect to these distance measures.
\end{abstract}

\keywords{Geometric mean, matrix divergence, Bregman divergence, relative entropy, strict convexity, barycentre.}

\subjclass[2010]{15B48, 49K35, 94A17, 81P45.}

\maketitle

\section{Introduction}

Let $p$ and $q$ be two discrete probability distributions;
i.e. $p=(p_1,\ldots,p_n)$ and $q=(q_1,\ldots,q_n)$ are $n$-vectors with nonnegative coordinates
such that $\sum p_i=\sum q_i=1.$
The {\it Hellinger distance} between $p$ and $q$ is the Euclidean norm of the difference between the square roots of $p$ and $q$; i.e.
\begin{equation}
d(p,q)=\|\sqrt{p}-\sqrt{q}\|_2=\left[\sum (\sqrt{p_i}-\sqrt{q_i})^2\right]^{1/2}\!=\left[ \sum(p_i+q_i)-2\sum\sqrt{p_iq_i}\right]^{1/2}\!.\label{eq1}
\end{equation}
This distance and its continuous version, are much used in statistics, where it is customary to take
$d_H(p,q)=\frac{1}{\sqrt{2}}d(p,q)$ as the definition of the Hellinger distance.
We have then 
\begin{equation}
d_H(p,q)=\sqrt{\tr \mathcal{A}(p,q)-\tr\mathcal{G}(p,q)},\label{eq2}
\end{equation}
where $\mathcal{A}(p,q)$ is the arithmetic mean of the vectors $p$ and $q,$
$\mathcal{G}(p,q)$ is their geometric mean, and $\tr\, x$ stands for $\sum x_i.$
\vskip.2in
A matrix/noncommutative/quantum version would seek to replace the probability
vectors $p$ and $q$ by {\it density matrices} $A$ and $B$; i.e.,
positive semidefinite matrices $A,B$ with $\tr\, A=\tr\, B=1.$
In the discussion that follows, the restriction on trace is not needed, and so we let $A$ and $B$ be any two positive semidefinite matrices.
On the other hand, a part of our analysis requires $A$ and $B$ to be positive definite.
This will be clear from the context.
We let $\mathbb{P}$ be the set of $n\times n$ complex positive definite matrices.
The notation $A\ge 0$ means that $A$ is positive (semi) definite.
\vskip.2in
Here we run into the essential difference between the matrix and the scalar case. For 
positive definite matrices $A$ and $B,$
there is only one possible
\todo{SG: minor edit (replaces ``it is clear that'')}
arithmetic mean, $\mathcal{A}(A,B)=(A+B)/2.$
However, the geometric mean $\mathcal{G}(A,B)$ could have different meanings.
Each of these leads to a different version of the Hellinger distance on matrices.
In this paper we study some of these distances and their properties.
\vskip.2in
The Euclidean inner product on $n\times n$ matrices is defined as
$\langle A,B\rangle=\tr\, A^*B.$
The associated {\it Euclidean norm} is 
$$\|A\|_2=(\tr\, A^*A)^{1/2}=(\sum |a_{ij}|^2)^{1/2}.$$
\vskip.2in
Recall that the matrices $AB$ and $BA$ have the same eigenvalues.
Thus if $A$ and $B$ are positive definite, then
$AB$ is not positive definite unless $A$ and $B$ commute.
However, the eigenvalues of $AB$ are all positive as they are the same as the eigenvalues of
$A^{1/2}BA^{1/2}.$
Also every matrix with positive eigenvalues has a unique square root with positive eigenvalues.
If $A,B$ are positive definite,
then we denote by $(AB)^{1/2}$ the square root that has positive eigenvalues.
Since $(AB)^{1/2}=A^{1/2}(A^{1/2}BA^{1/2})^{1/2}A^{-1/2},$
the matrices $(AB)^{1/2}$ and $(A^{1/2}BA^{1/2})^{1/2}$ are similar,
and hence have the same eigenvalues.
\vskip.2in
The straightforward generalisation of \eqref{eq1} for positive definite matrices $A,B$ is 
evidently 
\begin{equation}
d_1(A,B)=\|A^{1/2}-B^{1/2}\|_2=\left[\tr(A+B)-2\tr A^{1/2}B^{1/2}\right]^{1/2}.\label{eq3}
\end{equation}
Another version could be
\begin{equation}
d_2(A,B)=\left[ \tr(A+B)-2\tr(A^{1/2}BA^{1/2})^{1/2}\right]^{1/2}\!=\left[\tr(A+B)-2\tr(AB)^{1/2}\right]^{1/2}\!.\label{eq4}
\end{equation}
\vskip.2in
While it is clear from \eqref{eq3} that $d_1$ is a metric on $\mathbb{P},$
it is not obvious that $d_2$ is a metric.
It turns out that 
\begin{equation}
d_2(A,B)=\min\, \|A^{1/2}-B^{1/2}U\|_2,\label{eq5}
\end{equation}
where the minimum is taken over all unitary matrices $U.$
It follows from this that $d_2$ is a metric.
This is called the {\it Bures distance} in the quantum information
literature and the {\it Wasserstein metric} in the literature on optimal transport.
It plays an important role in both these subjects.
We refer the reader to \cite{bjl} for a recent exposition,
and to \cite{bz,j, mo, t} for earlier work.
The quantity $F(A,B)=\tr (A^{1/2}BA^{1/2})^{1/2}$ is called the {\it fidelity} between the states
$A$ and $B.$
In the special case when $A=uu^*,$ $B=vv^*$
are pure states, we have
$F(A,B)=|u^*v|$ and $d_2(A,B)=\sqrt{2}(1-|u^*v|)^{1/2}.$
For qubit states this is the distance on the Bloch sphere.
\vskip.2in
For various reasons, theoretical and practical,
the most accepted definition of {\it geometric mean} of $A,B$ is the 
entity
\begin{equation}
A\# B=A^{1/2}(A^{-1/2}BA^{-1/2})^{1/2}A^{1/2}.\label{eq6}
\end{equation}
This formula was introduced by Pusz and Woronowicz \cite{pw}.
When $A$ and $B$ commute $A\# B$ reduces to $A^{1/2}B^{1/2}.$
The mean $A\# B$ has been studied extensively for several years and has remarkable properties
that make it useful in diverse areas.
One of them is its connection with operator inequalities related to monotonicity and convexity
theorems for the quantum entropy.
See Chapter 4 of \cite{rbh1} for a detailed exposition.
Another object of interest has been the {\it log Euclidean mean} $\mathcal{L}(A,B)$ defined as
\begin{equation}
\mathcal{L}(A,B)=\exp\left(\frac{\log A+\log B}{2}\right).\label{eq7}
\end{equation}
This mean too reduces to $A^{1/2}B^{1/2}$ when $A$ and $B$ commute,
and has been used in various contexts \cite{xyz}, though it lacks some pleasing properties that $A\# B$ has.
\vskip.2in
Thus it is natural to consider two more matrix versions of the Hellinger distance,
viz, 
\begin{equation}
d_3(A,B)=\left[\tr(A+B)-2\tr(A\# B)\right]^{1/2},\label{eq8}
\end{equation}
and
\begin{equation}
d_4(A,B)=\left[\tr (A+B)-2\tr\mathcal{L}(A,B)\right]^{1/2}.\label{eq9}
\end{equation}
In view of what has been discussed, we may expect that $d_3$ and $d_4$ are metrics on $\mathbb{P}.$
However, it turns out that neither of them obeys the triangle inequality.
Examples are given in Section 2.
Nevertheless, this is compensated by the fact that the squares of $d_3$ and $d_4$ both are {\it divergences},
and hence they can serve as good distance measures.
\vskip.2in

\todo[inline]{SG: inconsistency of notation here $\mathbb{R}_+$ is the set of nonnegative numbers, but later is is used for positive numbers, being the one dimensional version of $\mathbb{P}$. I now use $\mathbb{R}_{++}$ for the positive reals and $\mathbb{R}_+$ for the nonnegative reals, and define it at the first occurrence}
A smooth function $\Phi$ from $\mathbb{P}\times\mathbb{P}$
to the set of nonnegative real numbers, $\nonnegativereals$,
is called a {\it divergence} if
\begin{itemize}
\item[$(i)$] $\Phi(A,B)=0$ if and only if $A=B.$
\item[$(ii)$] The first derivative $D\Phi$ with respect to the second variable
\rajendra{second variable precised}
vanishes on the diagonal;
i.e.,
\begin{equation}
D\Phi(A,X)\vert_{X=A}=0.\label{eq10}
\end{equation}
\item[$(iii)$] The second derivative $D^2\Phi$ is positive on the diagonal; i.e.,
\begin{equation}
D^2\Phi(A,X)\vert_{X=A}(Y,Y)\ge 0\textrm{ for all Hermitian }Y.\label{eq11}
\end{equation}
\end{itemize}
See \cite{am}, Sections 1.2 and 1.3.
\vskip.2in
The prototypical example is the Euclidean divergence $\Phi(A,B)=\|A-B\|_2^2.$
The functions $d_1^2(A,B)$ and $d_2^2(A,B)$ are also divergences.
Another well-known example is the Kullback-Leibler divergence \cite{am}.
A special kind of divergence is the {\it Bregman divergence}
\todo[inline]{SG: the mother function $\varphi$ should take values in $\mathbb{R}$ rather than $\mathbb{R}_+$, think of $\varphi(x)=x\log x$}
corresponding to a strictly convex differentiable
\rajendra{say it is differentiable}
function $\varphi:\mathbb{P}\to\mathbb{R}.$
If $\varphi$ is such a function, then 
\begin{equation}
\Phi(A,B)=\varphi(A)-\varphi(B)-D\varphi(B)(A-B),\label{eq12}
\end{equation}
is called the Bregman divergence corresponding to $\varphi.$
Not every divergence arises in this way.
In particular, $d_H^2(p,q),$ the square of the Hellinger distance,
on probability vectors is not a Bregman divergence.
\vskip.2in
Now we describe our main results.
We will show that both the functions
$$\Phi_3(A,B)=d_3^2(A,B)\textrm{ and }\Phi_4(A,B)=d_4^2(A,B)$$
are divergences.
We will show that $\Phi_3$ and $\Phi_4$ are jointly convex in the variables $A$ and $B,$
and strictly convex in each of the variables separately.
One consequence of this is that for every $m$-tuple
$A_1,\ldots,A_m$ in $\mathbb{P}$ and 
\todo{SG: added ``positive''}
positive weights
$w_1,\ldots,w_m$ the minimisation problem 
\begin{equation}
{\underset{X>0}{\min}}\sum\limits_{j=1}^{m}w_jd^2(X,A_j)\label{eq13}
\end{equation}
has a unique solution when $d=d_3$ or $d_4.$
When $d=d_1$ the minimum in \eqref{eq13} is attained at the $1/2$-power mean
\begin{equation}
Q_{1/2}=\left(\sum\limits_{j=1}^{m}w_jA_j^{1/2}\right)^2.\label{eq13a}
\end{equation}
This is one of the much studied family of classical power means.
When $d=d_2,$ the minimiser in \eqref{eq13} is the
Wasserstein mean \cite{ac, bjl}.
This is the unique solution of the matrix equation
\begin{equation}
X=\sum\limits_{j=1}^{m}w_j(X^{1/2}A_jX^{1/2})^{1/2}.\label{eq13b}
\end{equation}
This mean has major applications in optimal transport, statistics, quantum information and other areas.
Means with respect to various divergences have also been of interest in information theory.
See e.g., \cite{bmdg, nb}.
An inspection of \eqref{eq13a} and \eqref{eq13b} shows a common feature.
Both for $d_1$ and $d_2$ the minimiser in \eqref{eq13} is the solution of the equation
\begin{equation}
X=\sum\limits_{j=1}^{m}w_j\mathcal{G}(X,A_j),\label{eq16a}
\end{equation}
where $\mathcal{G}$ is the version of the geometric mean chosen in the definition of $d.$
That is, $\mathcal{G}(A,B)=A^{1/2}B^{1/2}$ in the case of $d_1,$
and $\mathcal{G}(A,B)=(A^{1/2}BA^{1/2})^{1/2}$ in the case of $d_2.$
It turns out that this is also the case for $d_4$ but not for $d_3.$
When $d=d_3$ the minimisation problem \eqref{eq13} has a unique solution
$X$ which is also the solution of the matrix equation
\begin{equation}
X^2=\frac{2}{\pi}\sum\limits_{j=1}^{m}w_j\int\limits_{0}^{\infty}\left(\lambda X^{-1}+A_j^{-1}\right)^{-2}\sqrt{\lambda}\textrm{d}\lambda.\label{eq17r}
\end{equation}
	This, in general, is different from the solution of the matrix equation
	\begin{equation}
	X=\sum\limits_{j=1}^{m}w_j(X\# A_j).\label{eq13r}
	\end{equation}
When $d=d_4,$ the problem \eqref{eq13} has a unique solution $X$
which is also the solution of the matrix equation
\begin{equation}
X=\sum\limits_{j=1}^{m}w_j\mathcal{L}(X,A_j).\label{eq18a}
\end{equation}

In the past few years there has been extensive work on the Cartan mean
(also known as Karcher or Riemann mean) of positive definite matrices.
\stephane{overfull fixed}
This is the solution of the minimisation problem
\begin{equation}
{\underset{X>0}{\min}}\sum\limits_{j=1}^{m}w_j\delta^2(X,A_j),\label{eq19a}
\end{equation}
where 
\[ \delta(A,B)=\|\log\, A^{-1/2}BA^{-1/2}\|_2
\]
is the Cartan metric on the manifold $\mathbb{P}$.\rajendra{Cartan added}
This mean from classical differential geometry has found several important applications \cite{ba, rbh1, rbh3, fj, n}.
\vskip.2in
Our analysis of $\Phi_4$ leads to some interesting facts about quantum relative entropy.
We observe that the convex function $\varphi(A)=\tr\left(A\log A-A\right)$ leads to the Bregman divergence $\Phi(A,B)=\tr\, A(\log A-\log B)-\tr(A-B),$
and the log Euclidean mean is the barycentre with respect to this Bregman divergence.
As a related issue, we explore
properties of barycentres with respect to general matrix Bregman divergences,
and point out similarities and crucial differences
between the scalar and matrix case.
\vskip.2in
Convexity properties of matrix Bregman divergences have been studied in \cite{bb, pv}, and matrix approximation problems with divergences in \cite{dt}.
Means with respect to matrix divergences are studied in~\cite{chebbimoakher}.
In \cite{sra} Sra studied a related distance function
\[
\delta_S(A,B):= \Big[ \log \det\big(\frac{A+B}{2}\big) - \frac{1}{2}(\log\det A + \log\det B)\Big]^{1/2}
\]
and showed that this is a metric on $\mathbb{P}$. Several parallels between this metric and the Cartan metric are pointed out in \cite{sra}.

\vskip.3in

\section{Convexity and derivative computations}
\vskip.2in
Inequalities for traces of matrix expressions have a long history.
For the different geometric means mentioned in Section 1,
we know \cite{bg} that 
\begin{equation}
\tr(A\# B)\le \tr\mathcal{L}(A,B)\le \tr(A^{1/2}B^{1/2})\le\tr(AB)^{1/2}.\label{eq14}
\end{equation}
It follows that
\begin{equation}
d_3^2(A,B)\ge d_4^2(A,B)\ge d_1^2(A,B)\ge d_2^2(A,B).\label{eq15}
\end{equation}
Since $d_1$ is a metric, this implies that $d_3^2(A,B)=0$ if and only if $A=B.$
The same is true for $d_4^2(A,B).$
Thus $\Phi_3$ and $\Phi_4$ satisfy the first condition in the definition of a divergence.
To prove $\Phi_3$ is a divergence we need to compute its first and second derivatives.
These results are of independent interest.

\begin{prop}\label{prop-new2}
Let $A$ be a positive definite matrix.
Let $g$ be the map on $\mathbb{P}$ defined as
$$g(X)=A\# X.$$
Then the derivative of $g$ is given by the formula
\begin{equation}
Dg(X)(Y)=\int\limits_{0}^{\infty}(\lambda+XA^{-1})^{-1}Y(\lambda+A^{-1}X)^{-1}\d\nu(\lambda),\label{eq5b}
\end{equation}
where $\d\nu(\lambda)=\frac{1}{\pi}\lambda^{1/2}\d\lambda.$
\end{prop}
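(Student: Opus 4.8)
The plan is to reduce everything to the integral representation of the matrix square root and to differentiate under the integral sign. I would write $g = \iota_A \circ s \circ \kappa_A$, where $\kappa_A(X) = A^{-1/2} X A^{-1/2}$ is the (linear) congruence map, $s(Z) = Z^{1/2}$ is the square-root map on matrices with positive spectrum, and $\iota_A(W) = A^{1/2} W A^{1/2}$. Since $\kappa_A$ and $\iota_A$ are linear, the chain rule gives $Dg(X)(Y) = A^{1/2}\bigl(Ds(Z)(A^{-1/2}YA^{-1/2})\bigr)A^{1/2}$ with $Z = A^{-1/2}XA^{-1/2}$, so the only genuine derivative to compute is that of the square root.

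For the square root I would start from the scalar identity $z^{1/2} = \frac1\pi \int_0^\infty \lambda^{1/2}\bigl(\lambda^{-1} - (\lambda+z)^{-1}\bigr)\,\d\lambda$, valid for $z>0$ (verified by the substitution $\lambda = zt$ and $\int_0^\infty t^{-1/2}(1+t)^{-1}\,\d t = \pi$), applied to the eigenvalues of the positive definite matrix $Z$, to obtain
$$Z^{1/2} = \frac1\pi \int_0^\infty \lambda^{1/2}\bigl(\lambda^{-1}I - (\lambda+Z)^{-1}\bigr)\,\d\lambda.$$
Differentiating under the integral, using $D[(\lambda+Z)^{-1}](W) = -(\lambda+Z)^{-1}W(\lambda+Z)^{-1}$ and the fact that the $\lambda^{-1}I$ term is constant in $Z$, yields the clean formula
$$Ds(Z)(W) = \frac1\pi\int_0^\infty \lambda^{1/2}(\lambda+Z)^{-1}W(\lambda+Z)^{-1}\,\d\lambda.$$

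It then remains to substitute $W = A^{-1/2}YA^{-1/2}$, conjugate by $A^{1/2}$, and collapse the congruences. The key algebraic identities are $\lambda I + A^{-1/2}XA^{-1/2} = A^{-1/2}(\lambda A + X)A^{-1/2}$, hence $(\lambda + Z)^{-1} = A^{1/2}(\lambda A + X)^{-1}A^{1/2}$, together with the two factorisations $\lambda A + X = (\lambda + XA^{-1})A = A(\lambda + A^{-1}X)$. Carrying the outer factors through reduces $A^{1/2}(\lambda+Z)^{-1}A^{-1/2}$ to $A(\lambda A + X)^{-1} = (\lambda + XA^{-1})^{-1}$ and $A^{-1/2}(\lambda+Z)^{-1}A^{1/2}$ to $(\lambda A + X)^{-1}A = (\lambda + A^{-1}X)^{-1}$, giving exactly \eqref{eq5b} with $\d\nu(\lambda)=\frac1\pi\lambda^{1/2}\,\d\lambda$.

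The main point requiring care is analytic rather than algebraic: justifying the differentiation under the integral and the convergence of the resulting integral. Since $XA^{-1}$ (resp.\ $A^{-1}X$) is similar to the positive definite matrix $A^{-1/2}XA^{-1/2}$, its spectrum lies in $(0,\infty)$, so $(\lambda + XA^{-1})^{-1}$ is well defined for every $\lambda \ge 0$, is bounded near $\lambda = 0$, and decays like $\lambda^{-1}$ as $\lambda \to \infty$; the integrand is therefore $O(\lambda^{1/2})$ near $0$ and $O(\lambda^{-3/2})$ at infinity, so the integral converges, and a uniform resolvent bound on a neighbourhood of $X$ lets dominated convergence legitimise differentiating under the integral sign. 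Noncommutativity of $A$ and $X$ forces one to track the order of factors throughout, but I expect no deeper obstacle.
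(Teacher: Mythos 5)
Your proof is correct and follows essentially the same route as the paper: both decompose $A\# X$ via the congruence $X\mapsto A^{-1/2}XA^{-1/2}$, differentiate the square root through its integral representation $\d\nu(\lambda)=\frac{1}{\pi}\lambda^{1/2}\d\lambda$, and collapse the conjugating factors using $A^{\pm 1/2}(\lambda+A^{-1/2}XA^{-1/2})^{-1}A^{\mp 1/2}=(\lambda+XA^{-1})^{-1}$ or $(\lambda+A^{-1}X)^{-1}$. The only differences are cosmetic: you normalise the scalar representation with $\lambda^{-1}$ in place of the paper's $\frac{1}{\sqrt{2}}+\frac{\lambda}{\lambda^2+1}$ (both yield the same derivative formula), and you supply the convergence and dominated-convergence justifications that the paper leaves to a citation.
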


\begin{proof}
We will use the integral representation
	\begin{equation}
	x^{1/2}=\frac{1}{\sqrt{2}}+\int\limits_{0}^{\infty}\left(\frac{\lambda}{\lambda^2+1}-\frac{1}{\lambda+x}\right)\d\nu(\lambda),\label{eq2b}
	\end{equation}
	where $\d\nu(\lambda)=\frac{1}{\pi}\lambda^{1/2}\d\lambda.$
	See \cite{rbh} p.143.
	Using this we see that the derivative of the function $X\to X^{1/2}$ is the linear map
	\begin{equation}
	DX^{1/2}(Y)=\int\limits_{0}^{\infty}(\lambda+X)^{-1}Y(\lambda+X)^{-1}\d\nu(\lambda),\label{eq4b}
	\end{equation}
	where $Y$ is any Hermitian matrix.
This shows that	\begin{align*}
	 & Dg(X)(Y)\\
	 & = \int\limits_{0}^{\infty}A^{1/2}(\lambda+A^{-1/2}XA^{-1/2})^{-1}A^{-1/2}YA^{-1/2}(\lambda+A^{-1/2}XA^{-1/2})^{-1}A^{1/2}\d\nu(\lambda)\nonumber\\
	 & = \int\limits_{0}^{\infty}(\lambda+XA^{-1})^{-1}Y(\lambda+A^{-1}X)^{-1}\d\nu(\lambda).
	\end{align*}
	This proves the proposition.
	\end{proof}

\begin{thm}\label{thm1}
Let $D\Phi_3$ and $D^2\Phi_3$ be the first and the second derivatives of $\Phi_3.$
Then 
\begin{equation}
D\Phi_3(A,A)=0,\label{eq16}
\end{equation}
\begin{equation}
D^2\Phi_3(A,A)(Y,Y)=\frac{1}{2}\tr\, YA^{-1}Y.\label{eq17}
\end{equation}
(In other words, the gradient of $\Phi_3$ at every diagonal point is $0$ and the Hessian is positive.)
\end{thm}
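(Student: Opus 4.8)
The plan is to treat $A$ as fixed and differentiate $\Phi_3(A,X)=\tr(A+X)-2\tr g(X)$ in the second variable, where $g(X)=A\#X$. Since $\tr(A+X)$ is affine in $X$, the whole computation reduces to differentiating $\tr g(X)$, and for this I would feed the formula \eqref{eq5b} of Proposition~\ref{prop-new2} directly into the trace.

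For the first derivative I would evaluate $Dg$ at the diagonal point $X=A$. There $XA^{-1}=A^{-1}X=I$, so each resolvent in \eqref{eq5b} collapses to the scalar $(\lambda+1)^{-1}$ and the formula gives
\begin{equation*}
Dg(A)(Y)=\left(\frac{1}{\pi}\int_0^\infty\frac{\lambda^{1/2}}{(\lambda+1)^2}\,\d\lambda\right)Y.
\end{equation*}
The scalar integral is an elementary Beta integral of value $\pi/2$, so $Dg(A)(Y)=\tfrac12 Y$. Hence $D\Phi_3(A,A)(Y)=\tr Y-2\cdot\tfrac12\tr Y=0$, which is \eqref{eq16}.

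For the second derivative I would differentiate $\tr Dg(X)(Y)$ once more in the direction $Y$. Each of the two resolvent factors in \eqref{eq5b} depends on $X$ — one through $XA^{-1}$, the other through $A^{-1}X$ — so the product rule produces two terms, each obtained from the resolvent identity $\frac{\d}{\d t}(\lambda+M+tN)^{-1}=-(\lambda+M)^{-1}N(\lambda+M)^{-1}$. Evaluating at $X=A$ again collapses every resolvent to $(\lambda+1)^{-1}$, and by cyclicity of the trace both terms reduce to $(\lambda+1)^{-3}\tr(YA^{-1}Y)$. Integrating gives
\begin{equation*}
D^2\Phi_3(A,A)(Y,Y)=4\left(\frac{1}{\pi}\int_0^\infty\frac{\lambda^{1/2}}{(\lambda+1)^3}\,\d\lambda\right)\tr(YA^{-1}Y),
\end{equation*}
where the Beta integral here equals $\pi/8$, yielding $\tfrac12\tr(YA^{-1}Y)$, which is \eqref{eq17}. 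Positivity is then immediate: since $Y$ is Hermitian and $A^{-1}>0$, we have $\tr(YA^{-1}Y)=\tr(A^{-1}Y^2)\ge 0$, with equality only when $Y=0$.

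The routine parts are the two Beta-function evaluations; the only point requiring care is the bookkeeping in the second derivative, where the integrand $(\lambda+XA^{-1})^{-1}Y(\lambda+A^{-1}X)^{-1}$ is not self-adjoint and one must differentiate both non-commuting factors \emph{before} specialising to $X=A$. The main simplification — and what makes the calculation tractable — is that on the diagonal all resolvents become scalar multiples of the identity, so the matrix-ordering issues disappear and the answer reduces to a single scalar integral times $\tr(YA^{-1}Y)$.
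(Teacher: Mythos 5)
Your proof is correct and follows essentially the same route as the paper's: both feed the formula \eqref{eq5b} of Proposition~\ref{prop-new2} into $\Phi_3(A,X)=\tr(A+X)-2\tr\, g(X)$, evaluate on the diagonal where all resolvents collapse to $(\lambda+1)^{-1}$, and finish with the two Beta integrals. Your constant bookkeeping in the second derivative (a factor $\tfrac{4}{\pi}$ multiplying a Beta integral of value $\tfrac{\pi}{8}$) is in fact slightly more careful than the paper's displayed intermediate coefficient $\tfrac{2}{\pi}$, which appears to be a typo, though both arrive at the same correct value $\tfrac{1}{2}\tr\, YA^{-1}Y$.
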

\begin{proof}
For a fixed $A,$ let $g$ be the map on $\mathbb{P}$ defined as
$g(X)=A\# X.$
When $X=A,$ the expression in \eqref{eq5b} reduces to
$$\frac{1}{\pi}\int\limits_{0}^{\infty}\frac{\lambda^{1/2}}{(1+\lambda)^2}\d\lambda\, Y=\frac{1}{2}Y.$$
Recalling that $\Phi_3(A,X)=\tr(A+X)-2\tr g(X),$
we see that 
$$D\Phi_3(A,X)\vert_{X=A}(Y)=0\textrm{ for all }Y.$$
This establishes \eqref{eq16}.
Next note that for the second derivative we have
\begin{equation}
D^2\Phi_3(A,X)(Y,Z)=-2D^2 \left(\tr g(X)\right)(Y,Z).\label{eq21}
\end{equation}
From \eqref{eq5b} we see that
\begin{align}
 & D\left(\tr\, g(X)\right)(Y)\nonumber\\
& =\int\limits_{0}^{\infty}\tr(\lambda+XA^{-1})^{-1}Y(\lambda+A^{-1}X)^{-1}\d\nu(\lambda)).\label{eq22}
\end{align}
By definition 
$$D^2(\tr\, g(X))(Y,Z)=\frac{\d}{\d t}\vert_{t=0}D(\tr\, g(X+tZ))(Y).$$
Hence, from \eqref{eq22} we see that $D^2(\tr\, g(X))(Y,Z) $
is equal to\rajendra{displayed eq edited}
\begin{align}
& -\int\limits_{0}^{\infty}\tr(\lambda+XA^{-1})^{-1}ZA^{-1}(\lambda+XA^{-1})^{-1}Y(\lambda+A^{-1}X)^{-1}\d\nu(\lambda)\nonumber\\
& \ \ -\int\limits_{0}^{\infty}\tr(\lambda+XA^{-1})^{-1}Y(\lambda+A^{-1}X)^{-1}A^{-1}Z(\lambda+A^{-1}X)^{-1}\d\nu(\lambda).\label{eq23}
\end{align}
When $X=A$ and $Z=Y,$ this reduces to give 
\begin{eqnarray*}
D^2\Phi_3(A,A)(Y,Y) & = & \frac{2}{\pi}\int\limits_{0}^{\infty}\frac{\lambda^{1/2}}{(1+\lambda)^3}\d\lambda\, \tr\, YA^{-1}Y\\
 & = & \frac{1}{2}\tr\, YA^{-1}Y.
\end{eqnarray*}
This proves \eqref{eq17}.
\end{proof}
\todo[inline]{SG: rewrote what follows as the restriction to $\mathbb{R}_+$ is unnatural and the notation $\positivereals$ was undefined}
Consider maps $f$ defined on $\mathbb{P}$ and taking
values in $\mathbb{P}$ or $\positivereals$ (the set of positive real numbers).
We say that $f$
is {\it concave} if for all $X,Y$ in $\mathbb{P}$
and $0\le \alpha\le 1$
\begin{equation}
f((1-\alpha)X+\alpha Y)\ge (1-\alpha)f(X)+\alpha f(Y).\label{eq24}
\end{equation}
It is {\it strictly concave} if the two sides of \eqref{eq24} are equal only if $X=Y.$
A map $f$ from $\mathbb{P}\times\mathbb{P}$ into $\mathbb{P}$ or $\mathbb{R}_+$ is called {\it jointly concave} if for all
$X_1,X_2,Y_1,Y_2$ in $\mathbb{P}$ and $0\le \alpha\le 1,$
$$ f((1-\alpha)X_1+\alpha Y_1,(1-\alpha)X_2+\alpha Y_2)\ \ge (1-\alpha)f(X_1,X_2)+\alpha f(Y_1,Y_2).$$

It is a basic fact in the theory of the geometric mean that $A\# B$ is jointly concave in $A$ and $B$, see~\cite{ando,alm}. However, it is not strictly jointly concave. Indeed, even the function $f(a,b)=\sqrt{ab}$ on $\nonnegativereals\times\nonnegativereals$ is not strictly jointly concave (its restriction to the diagonal is linear). Our next theorem says that in each of the variables separately, the geometric mean is strictly concave.
\rajendra{para above edited}

\begin{thm}\label{thm2}
For each $A$ the function
$$f(X)=\tr\, A\# X$$
is strictly concave on $\mathbb{P}.$ This implies that the function
$g(X)=A\# X$ is also strictly concave.\rajendra{last sentence added}
\end{thm}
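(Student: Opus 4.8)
The plan is to show that $f=\tr\, g$ has a strictly negative definite Hessian at every point of $\mathbb{P}$, and then to recover the (Loewner) strict concavity of the matrix-valued map $g$ by taking traces. Strict concavity of a smooth function on the convex open set $\mathbb{P}$ will follow once we prove $D^2f(X)(Y,Y)<0$ for every $X\in\mathbb{P}$ and every nonzero Hermitian $Y$, since along any segment $t\mapsto(1-t)X+tZ$ with $X\ne Z$ the scalar function $t\mapsto f((1-t)X+tZ)$ then has strictly negative second derivative and is thus strictly concave on $[0,1]$.

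For the Hessian I would start from the general second-derivative formula \eqref{eq23}, which is valid for arbitrary $X$, and set $Z=Y$. To make the sign transparent I introduce $W=A^{-1/2}XA^{-1/2}>0$, the Hermitian matrix $S=A^{-1/2}YA^{-1/2}$ (nonzero iff $Y\ne0$), and $R_\lambda=(\lambda+W)^{-1}>0$. Using $(\lambda+XA^{-1})^{-1}=A^{1/2}R_\lambda A^{-1/2}$ and $(\lambda+A^{-1}X)^{-1}=A^{-1/2}R_\lambda A^{1/2}$, a short manipulation with the cyclicity of the trace should collapse both integrands in \eqref{eq23} to the single expression $\tr\,A R_\lambda S R_\lambda S R_\lambda$, yielding
\begin{equation*}
D^2f(X)(Y,Y)=-2\int_0^\infty \tr\,A R_\lambda S R_\lambda S R_\lambda\,\d\nu(\lambda).
\end{equation*}

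It then remains to see that each integrand is nonnegative and vanishes only when $S=0$. Writing $P=R_\lambda A R_\lambda>0$ and $M=S P^{1/2}$, cyclicity gives $\tr\,A R_\lambda S R_\lambda S R_\lambda=\tr\,P S R_\lambda S=\tr\,M^* R_\lambda M\ge0$, with equality precisely when $R_\lambda^{1/2}M=0$, i.e. when $M=0$, i.e. (since $P^{1/2}$ is invertible) when $S=0$, i.e. when $Y=0$. As the integrand is continuous and strictly positive for $Y\ne0$ and $\d\nu$ is a positive measure, the integral is strictly positive, so $D^2f(X)(Y,Y)<0$; this proves that $f$ is strictly concave. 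Finally, if $g((1-\alpha)X+\alpha Z)=(1-\alpha)g(X)+\alpha g(Z)$ as matrices for some $0<\alpha<1$, then taking traces and invoking the strict concavity of $f$ forces $X=Z$; together with the known (Loewner) concavity of $A\#X$ this gives the strict concavity of $g$.

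The step I expect to be the main obstacle is the simplification collapsing the two a priori different integrands of \eqref{eq23} into the manifestly sign-definite Gram-type form $\tr\,M^*R_\lambda M$: one must track the noncommuting factors $A^{\pm1/2}$ carefully, and the coincidence of the two integrands is a small surprise rather than an obvious symmetry. Everything after that reduces to the elementary positivity of $\tr\,M^*R_\lambda M$ with $R_\lambda>0$.
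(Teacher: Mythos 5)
Your proof is correct, but it takes a genuinely different route from the paper's. The paper argues through the equality case of concavity: assuming $\tr\,A\#\big(\frac{X+Y}{2}\big)=\frac{1}{2}\left(\tr\,A\#X+\tr\,A\#Y\right)$, it invokes the known (Ando) concavity of the geometric mean to upgrade the trace equality to the matrix equality $A\#\big(\frac{X+Y}{2}\big)=\frac{1}{2}\left(A\#X+A\#Y\right)$ (a positive semidefinite matrix with zero trace is zero), then cancels the outer factors $A^{1/2}$, squares, and rearranges into $\big[(A^{-1/2}XA^{-1/2})^{1/2}-(A^{-1/2}YA^{-1/2})^{1/2}\big]^2=0$, forcing $X=Y$; no derivatives appear at all. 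You instead prove the stronger, quantitative statement that the Hessian of $f$ is negative definite at every $X\in\mathbb{P}$, starting from \eqref{eq23}, which the paper derives for general $X$ in the proof of Theorem~\ref{thm1} but evaluates only on the diagonal $X=A$. I checked your key simplification: with $W=A^{-1/2}XA^{-1/2}$, $S=A^{-1/2}YA^{-1/2}$, $R_\lambda=(\lambda+W)^{-1}$, both integrands of \eqref{eq23} with $Z=Y$ do collapse to the same quantity $\tr\,AR_\lambda SR_\lambda SR_\lambda=\tr\,M^*R_\lambda M\geq 0$, vanishing only when $Y=0$, so $D^2f(X)(Y,Y)<0$ for $Y\neq 0$ and strict concavity follows along every segment. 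What each approach buys: the paper's argument is more elementary (pure algebra plus the known concavity theorem), while yours yields an explicit Hessian formula at arbitrary $X$ (a genuine extension of Theorem~\ref{thm1}) and, for the scalar function $f$, is self-contained in that it does not use Ando's concavity theorem at all; like the paper, you still need the Loewner concavity of $\#$ only for the final step, since the paper's notion of strict concavity of the matrix-valued map $g$ presupposes its concavity, and both proofs then handle that step identically by taking traces.
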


\begin{proof}
Suppose 
$$\tr\left(A\#\left(\frac{X+Y}{2}\right)\right)=\frac{\tr\, A\# X+\tr\, A\# Y}{2}.$$
We have to show that this implies $X=Y.$
Rewrite the above equality as
$$\tr\left\{A\#\left(\frac{X+Y}{2}\right)-\frac{A\# X+A\# Y}{2}\right\}=0.$$
By the concavity of $A\# X,$ the expression inside the braces is positive semidefinite.
The trace of such a matrix is zero if and only if the matrix itself is zero.
Hence 
$$A\#\left(\frac{X+Y}{2}\right)=\frac{A\# X+A\# Y}{2}.$$
Using the definition \eqref{eq6} this can be written as 
\begin{eqnarray*}
A^{1/2}\left(A^{-1/2}\frac{X+Y}{2}A^{-1/2}\right)^{1/2}A^{1/2} & = & \frac{1}{2}A^{1/2}\left(A^{-1/2}XA^{-1/2}\right)^{1/2}A^{1/2}\\
 &  & +\frac{1}{2}A^{1/2}(A^{-1/2}YA^{-1/2})^{1/2}A^{1/2}.
\end{eqnarray*}
Cancel the factors $A^{1/2}$ occurring on both sides,
then square both sides,
and rearrange terms to get
\begin{eqnarray*}
A^{-1/2}(X+Y)A^{-1/2}-(A^{-1/2}XA^{-1/2})^{1/2}(A^{-1/2}YA^{-1/2})^{1/2} & & \\
\ -(A^{-1/2}YA^{-1/2})^{1/2}(A^{-1/2}XA^{-1/2})^{1/2} & = & 0.
\end{eqnarray*}
This is the same as saying 
$$\left[(A^{-1/2}XA^{-1/2})^{1/2}-(A^{-1/2}YA^{-1/2})^{1/2}\right]^2=0.$$
The square of a Hermitian matrix $Z$ is zero only if $Z=0.$
Hence, we have
$$(A^{-1/2}XA^{-1/2})^{1/2}=(A^{-1/2}YA^{-1/2})^{1/2}.$$
From this it follows that $X=Y.$

Finally, if $X,Y$ are to elements of $\mathbb{P}$ such that $g((X+Y)/2)=(g(X)+g(Y))/2$, taking traces on both sides, we have, $f((X+Y)/2)=(f(X)+f(Y))/2.$ We have seen that this implies $X=Y$. \rajendra{last para added}
\end{proof}

As a consequence, we observe that 
$$\Phi_3(A,B)=\tr(A+B)-2\tr(A\# B)$$
is jointly convex in $A$ and $B$ and is strictly convex in each of the variables separately.
\vskip.2in
Now we turn to the analysis of $\Phi_4$ on the same lines as above.
The arguments we present in this case are quite different.
From \eqref{eq15} we know that 
$$\Phi_3(A,B)\ge\Phi_4(A,B)\ge \Phi_1(A,B).$$
We also know that 
$$\Phi_3(A,A)=\Phi_4(A,A)=\Phi_1(A,A)=0,$$
and
$$D\Phi_1(A,A)=D\Phi_3(A,A)=0.$$
Together, these three relations lead to the conclusion that 
$$D\Phi_4(A,A)=0.$$
Thus $\Phi_4$ satisfies condition \eqref{eq10}.
\vskip.2in
By a theorem of Bhagwat and Subramanian \cite{bs}
\begin{equation}
\exp\left(\frac{1}{m}\sum\limits_{j=1}^{m}\log\, A_j\right)={\underset{p\to 0^+}{\lim}}\left(\frac{1}{m}\sum\limits_{j=1}^{m}A_j^p\right)^{1/p}.\label{eq25}
\end{equation}
One of the several remarkable concavity theorems of Carlen and Lieb, \cite{cl1,cl2} says that
the expression $\tr\left(\sum A_j^p\right)^{1/p}$ is jointly concave in $A_1,\ldots,A_m,$
when $0<p\le 1,$ and jointly convex when $1\le p\le 2.$
Using equation \eqref{eq25} we obtain from this the joint concavity of $\tr\mathcal{L}(A,B).$
As a consequence $\Phi_4(A,B)$ is jointly convex in $A,B.$
Hence we have proved the following theorem.

\begin{thm}
The function $\Phi_4$ is a divergence on $\mathbb{P}.$
\end{thm}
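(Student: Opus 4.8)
The plan is to verify the three defining conditions $(i)$–$(iii)$ of a divergence for $\Phi_4$ directly, reusing the inequalities and derivative identities already assembled above. The first two conditions are essentially bookkeeping built on the chain \eqref{eq15}, and the real content sits in the positivity of the Hessian, which I would extract from joint convexity of $\Phi_4$.

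First I would settle condition $(i)$. The chain \eqref{eq15} gives $\Phi_4(A,B)\ge \Phi_1(A,B)=d_1^2(A,B)\ge 0$, so $\Phi_4$ is nonnegative and maps into $\nonnegativereals$. If $\Phi_4(A,B)=0$ then $\Phi_1(A,B)=0$, and since $d_1$ is a metric this forces $A=B$; conversely $\mathcal{L}(A,A)=A$ yields $\Phi_4(A,A)=\tr(2A)-2\tr A=0$. Hence $\Phi_4(A,B)=0$ if and only if $A=B$.

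For condition $(ii)$, that is \eqref{eq10}, I would run a squeeze argument on the derivative in the second variable. Fix $A$ and set $h(X)=\Phi_4(A,X)-\Phi_1(A,X)$ on $\mathbb{P}$. By \eqref{eq15} we have $h\ge 0$, while $h(A)=0$, so $X=A$ is an interior minimiser of the smooth function $h$ and therefore $Dh(A)=0$, i.e. $D\Phi_4(A,A)=D\Phi_1(A,A)$. Since $D\Phi_1(A,A)=0$, this gives $D\Phi_4(A,A)=0$. The upper bound $\Phi_3\ge\Phi_4$ together with $D\Phi_3(A,A)=0$ yields the same conclusion through $\Phi_3$, a useful consistency check. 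The only delicate point is that $X\mapsto\mathcal{L}(A,X)$ is smooth, which holds because $\log$ and $\exp$ are analytic on the spectra of the matrices involved.

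The main obstacle is condition $(iii)$, and I would deduce it from joint convexity of $\Phi_4$, whose hard step is the joint concavity of $\tr\mathcal{L}(A,B)$. For each $p\in(0,1]$ the Carlen--Lieb theorem makes $(A,B)\mapsto\tr(A^p+B^p)^{1/p}$ jointly concave, hence so is the positive multiple $\tr\big(\tfrac{A^p+B^p}{2}\big)^{1/p}$; letting $p\to 0^+$ and invoking the Bhagwat--Subramanian limit \eqref{eq25} (with $m=2$ and equal weights) together with continuity of the trace, the pointwise limit $\tr\mathcal{L}(A,B)$ inherits joint concavity, since a pointwise limit of jointly concave functions on a convex domain is jointly concave. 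Consequently $\Phi_4(A,B)=\tr(A+B)-2\tr\mathcal{L}(A,B)$ is jointly convex, being a linear map plus $-2$ times a jointly concave map. Joint convexity implies that $X\mapsto\Phi_4(A,X)$ is convex for fixed $A$, so its Hessian is positive semidefinite everywhere; evaluating at $X=A$ gives $D^2\Phi_4(A,A)(Y,Y)\ge 0$ for every Hermitian $Y$, which is \eqref{eq11}. With $(i)$–$(iii)$ in hand, $\Phi_4$ is a divergence on $\mathbb{P}$.
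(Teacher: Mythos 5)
Your proof is correct and follows essentially the same route as the paper: condition $(i)$ from the chain \eqref{eq15} and the metric property of $d_1$, condition $(ii)$ by the squeeze of $\Phi_4$ between $\Phi_1$ and $\Phi_3$ on the diagonal, and condition $(iii)$ by deducing joint convexity of $\Phi_4$ from the Carlen--Lieb concavity theorem combined with the Bhagwat--Subramanian limit \eqref{eq25}. The only difference is that you make explicit some steps the paper leaves implicit, such as the harmless $2^{-1/p}$ scaling, the preservation of concavity under pointwise limits, and the passage from convexity in the second variable to positivity of the Hessian at $X=A$.
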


We have shown that $\Phi_3$ and $\Phi_4$ are divergences. But unlike $\Phi_1$ and $\Phi_2$ they are not the squares of metrics on $\mathbb{P},$
i.e., $d_3$ and $d_4$ are not metrics. The following two examples show that $d_3$ and $d_4$ do not satisfy the triangle inequality.
\vskip.2in
Let
$$A=\begin{bmatrix} 2 & 5\\ 5 & 17\end{bmatrix},\ 
B=\begin{bmatrix} 13 & 8 \\ 8 & 5 \end{bmatrix},\ 
C=\begin{bmatrix} 5 & 3 \\ 3 & 10 \end{bmatrix}.$$
Then $d_3(A,B)\approx 5.0347$ and $d_3(A,C)+d_3(C,B)\approx 4.6768.$
This example is a small modification of one suggested to us by Suvrit Sra, to whom we are thankful.
\vskip.1in
Let
$$A=\begin{bmatrix} 4 & -7\\ -7 & 13\end{bmatrix},B=\begin{bmatrix} 8 & -2 \\ -2 & 1 \end{bmatrix},C=\begin{bmatrix} 5 & -4 \\ -4 & 5\end{bmatrix}.$$
Then $d_4(A,B)\approx 3.3349$ and $d_4(A,C)+d_4(C,B)\approx 3.3146.$
\vskip.2in
Next we study some more properties of $\Phi_4$, like its strict convexity in each of the arguments,
and its connections with matrix entropy.
To put these in context we recall some facts about Bregman divergence.
\vskip.2in
Let $\varphi:\mathbb{R}_+\to\mathbb{R}$ be a smooth strictly convex function
and let\todo{SG: I now assume the mother function $\varphi$ takes value in $\mathbb{R}$ rather than in $\mathbb{R}_+$ [to be done systematically]]}
\begin{equation}
\Phi(x,y)=\varphi(x)-\varphi(y)-\varphi^\prime(y)(x-y),\label{eq26}
\end{equation}
be the associated Bregman divergence.
Then $\Phi$ is strictly convex in the variable $x$ but need not be convex in $y.$
(See, e.g., \cite{dt} Section 2.2.)

Given $x_1,\ldots,x_m$ in $\mathbb{R}_+,$ the minimiser
\begin{equation}
\argmin\sum\limits_{j=1}^{m}\frac{1}{m}\Phi(x_j,x),\label{eq27}
\end{equation}
always turns out to be the arithmetic mean 
$$\overline{x}=\sum\limits_{j=1}^{m}\frac{1}{m}x_j,$$
independent of the mother function $\varphi.$
\vskip.2in
In fact, this property characterises Bregman divergences;
see \cite{dt,bmdg}.
We can also consider the problem 
\begin{equation}
\argmin\sum\limits_{j=1}^{m}\frac{1}{m}\Phi(x,x_j).\label{eq28}
\end{equation}
In this case, a calculation shows that the solution is the quasi-arithmetic mean (the Kolmogorov mean) associated
with the function $\varphi^\prime.$
More precisely, the solution of \eqref{eq28}, which we may think
of as the mean, or the barycentre, of the points $x_1,\ldots,x_m$ with respect to the divergence $\Phi$ is
\begin{equation}
\mu_\Phi(x_1,\ldots,x_m)={\varphi^\prime}^{-1}\left(\sum\limits_{j=1}^{m}\frac{1}{m}\varphi^\prime(x_j)\right).\label{eq29}
\end{equation}
\vskip.2in
We wish to study the matrix version of the problems~\eqref{eq27} and~\eqref{eq28}.
Here we run into a basic difference between the one-variable and the several-variables cases. It is natural to replace the derivative $\varphi^\prime$
in~\eqref{eq29} by the gradient $\nabla\varphi$ in the several-variables
case. If $\varphi$ is a differentiable strictly convex function defined on an open interval $I$ of $\R$, then, its derivative $\varphi^\prime$ is a strictly monotone continuous function, and hence a homeomorphism from $I$ to its image $\varphi^{\prime}(I)$. In particular, $(\varphi^{\prime})^{-1}$ is defined. The appropriate
generalisation of these facts to the several-variable case requires the
notion of a {\em Legendre type} function.
\begin{definition}[Section 26 in~\cite{rockafellar} or Def.\ 2.8 in~\cite{bb97}]
Suppose $\varphi$ is a convex lower-semicontinuous function from $\mathbb{R}^n$ to $\mathbb{R}\cup\{+\infty\}$, and let $\dom f:= \{x\in\mathbb{R}^n\mid \varphi(x)<+\infty\}$.
We say that $\varphi$ is 
of {\em Legendre type} if it satisfies
\begin{enumerate}\renewcommand{\theenumi}{\roman{enumi}}
\item $\interior \dom\varphi\neq \varnothing$,
\item $\varphi$ is differentiable on $\interior \dom\varphi$,
\item $\varphi$ is strictly convex on $\interior \dom\varphi$, 
\item\label{itdef-iv} $\lim_{t\to 0^+} \langle \nabla \varphi(x+t(y-x)),y-x \rangle  =-\infty$, for
all $x\in \boundary(\dom(\varphi))$ and $y\in 
\interior \dom\varphi$.
\end{enumerate}
\end{definition}
If $\varphi$ is of Legendre type, the gradient mapping
$\nabla \varphi$ is a homeomorphism from $\interior \dom\varphi$ to $\interior \dom\varphi^\star$, where $\varphi^\star$ denotes
the Legendre-Fenchel conjugate of $\varphi$. See Theorem~26.5 in~\cite{rockafellar}. 

\begin{lem}\label{lem-new}
If $\varphi$ is of Legendre type, and $\Phi$ is the Bregman divergence associated
with $\varphi$, and $a_1,\dots,a_m\in \interior \dom\varphi$, then the function
\[ x\mapsto \sum_{j=1}^m \Phi(x,a_j)
\]
achieves its minimum at a unique point, which belongs to $\interior \dom \varphi$. 
\end{lem}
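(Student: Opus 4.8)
The plan is to reduce the problem to a single stationarity equation for $\nabla\varphi$ and then to solve that equation using the homeomorphism property of the gradient map recalled just above the statement. First I would expand the objective. Writing the Bregman divergence as $\Phi(x,y)=\varphi(x)-\varphi(y)-\langle\nabla\varphi(y),x-y\rangle$ and summing over $j$, one obtains
\[
F(x):=\sum_{j=1}^m\Phi(x,a_j)=m\,\varphi(x)-\langle c,x\rangle+K,
\]
where $c=\sum_{j=1}^m\nabla\varphi(a_j)$ and $K=\sum_{j=1}^m\big(\langle\nabla\varphi(a_j),a_j\rangle-\varphi(a_j)\big)$ is a constant independent of $x$. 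Hence $F$ differs from $m\varphi$ by an affine function: it is convex on $\dom\varphi$, strictly convex on $\interior\dom\varphi$, and on the interior it is differentiable with $\nabla F(x)=m\nabla\varphi(x)-c$. The stationarity condition $\nabla F(x)=0$ is therefore equivalent to
\[
\nabla\varphi(x)=b,\qquad b:=\frac1m\sum_{j=1}^m\nabla\varphi(a_j).
\]

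Next I would locate the solution of this equation. Since each $a_j\in\interior\dom\varphi$ and $\varphi$ is of Legendre type, Theorem~26.5 in \cite{rockafellar} gives $\nabla\varphi(a_j)\in\interior\dom\varphi^\star$. The set $\interior\dom\varphi^\star$ is the interior of a convex set, hence itself convex, so the average $b$ again lies in $\interior\dom\varphi^\star$. Because $\nabla\varphi$ is a homeomorphism of $\interior\dom\varphi$ onto $\interior\dom\varphi^\star$, there is a unique point $x^\star=(\nabla\varphi)^{-1}(b)\in\interior\dom\varphi$ with $\nabla\varphi(x^\star)=b$, i.e.\ with $\nabla F(x^\star)=0$.

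It remains to check that this interior critical point is the global minimizer and that it is unique. For global minimality I would argue by convexity directly: for any $z\in\dom\varphi$ the function $t\mapsto F(x^\star+t(z-x^\star))$ on $[0,1]$ is convex, and its derivative at $t=0$ equals $\langle\nabla F(x^\star),z-x^\star\rangle=0$, so it is nondecreasing, which yields $F(z)\ge F(x^\star)$. For uniqueness, suppose $z\neq x^\star$ were another minimizer; the midpoint $w=(x^\star+z)/2$ lies in $\interior\dom\varphi$ (the open segment joining an interior point to any point of a convex set stays in the interior), and convexity together with minimality forces $F(w)=F(x^\star)$, so $w$ is a second interior minimizer distinct from $x^\star$; strict convexity on the interior, applied to $x^\star$ and $w$, then gives $F\big((x^\star+w)/2\big)<F(x^\star)$, contradicting minimality. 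Hence $x^\star$ is the unique minimizer, and it belongs to $\interior\dom\varphi$.

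The main obstacle—and the only place where the full strength of the Legendre-type hypothesis is used—is the claim that the averaged gradient $b$ lands in $\interior\dom\varphi^\star$, so that the equation $\nabla\varphi(x)=b$ is genuinely solvable inside the domain. This is precisely what condition (iv) of the definition delivers through Rockafellar's theorem: the barrier behaviour of $\nabla\varphi$ near $\boundary\dom\varphi$ is exactly what keeps the minimizer away from the boundary. Everything else reduces to soft convexity reasoning.
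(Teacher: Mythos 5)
Your proof is correct, but it takes a genuinely different route from the paper's. The paper's proof (Appendix A) is variational: it invokes Theorem 3.7(iii) of \cite{bb97} to get coercivity of each $x\mapsto\Phi(x,a_j)$, deduces from lower semicontinuity that the sum attains its infimum at some $\bar x\in\closure\interior\dom\varphi$, and then uses the barrier condition (iv) of the Legendre definition directly---the derivative of the objective along a segment from a putative boundary minimiser toward an interior point tends to $-\infty$---to rule out $\bar x\in\boundary(\dom\varphi)$. You instead argue by duality: since each $\Phi(\cdot,a_j)$ differs from $\varphi$ by an affine function, the objective is $m\varphi$ plus an affine map, and the stationarity equation $\nabla\varphi(x)=\frac{1}{m}\sum_j\nabla\varphi(a_j)$ is solvable because the right-hand side is a convex combination of points of the open convex set $\interior\dom\varphi^\star$ and $\nabla\varphi$ is a homeomorphism of $\interior\dom\varphi$ onto that set (Theorem 26.5 of \cite{rockafellar}, which the paper quotes but uses only to make sense of $(\nabla\varphi)^{-1}$ in \eqref{eq32}); global optimality and uniqueness then follow from soft convexity arguments, all of which are sound. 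Your route buys constructiveness: it exhibits the minimiser as $(\nabla\varphi)^{-1}\big(\frac{1}{m}\sum_j\nabla\varphi(a_j)\big)$, so it proves \Cref{lem-new} and formula \eqref{eq32} of statement~(\ref{it-ii}) of \Cref{thm3} in one stroke, and it avoids any appeal to the coercivity theorem of \cite{bb97}. The paper's route is more flexible: existence is obtained before (and without) identifying the image of the gradient map, and the coercivity-plus-boundary-exclusion argument adapts to minimisation over closed convex subsets (the setting of Theorem 3.12 in \cite{bb97}) or to objectives with extra convex terms, where no closed-form stationarity equation is available. Both proofs ultimately rest on the Legendre condition (iv): yours filtered through Rockafellar's theorem, the paper's used directly.
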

The proof is given in \Cref{appendixa}.
We shall apply this lemma in the situation where $\varphi$ is a convex function defined only on $\mathbb{P}$ and taking finite values on this set. 
The map $\varphi$ trivially extends to a convex lower-semicontinuous function defined on the whole space of Hermitian matrices---set $\varphi(X):=\liminf_{Y\to X,\; Y\in \mathbb{P}} \varphi(Y)$ for $X\in\boundary(\mathbb{P})$, and $\varphi(X)=+\infty$ if $X\not\in \boundary(\mathbb{P})$. We shall say that
the original function $\varphi$ defined on $\mathbb{P}$ is of Legendre type if its extension is of Legendre type. 
\begin{thm}\label{thm3}
Let $\varphi$ be a differentiable strictly convex function from $\mathbb{P}$ to 
$\mathbb{R},$
and let $\Phi$ be the Bregman divergence corresponding to $\varphi.$ Then:
\begin{enumerate}\renewcommand{\theenumi}{\roman{enumi}}
\item\label{it-i} The minimiser in the problem
\begin{equation}
\argmin_{X\in \mathbb{P}}\sum\limits_{j=1}^{m}\frac{1}{m}\Phi(A_j,X),\label{eq30}
\end{equation}
is the arithmetic mean
$\sum\limits_{j=1}^{m}\frac{1}{m}A_j.$
\item\label{it-ii} If, in addition, $\varphi$ is of Legendre type, then the problem
\begin{equation}
\argmin_{X\in\mathbb{P}}\sum\limits_{j=1}^{m}\frac{1}{m}\Phi(X,A_j)\label{eq31}
\end{equation}
has a unique solution, and this is given by
\begin{equation}
X=(\nabla \varphi)^{-1}\Big(\sum\limits_{j=1}^{m}\frac{1}{m}\nabla \varphi(A_j)\Big) \enspace .\label{eq32}
\end{equation}
\item\label{it-iii} If $\psi$ is any differentiable strictly convex function from $\positivereals$ to $\mathbb{R}$ and $\Phi$ is
the Bregman divergence on $\mathbb{P}$ corresponding to
the function $\varphi(X):=\tr\psi(X)$ on $\mathbb{P}$,
then the solution of the minimisation problem~\eqref{eq31} is
\begin{equation}
X= (\psi^{\prime})^{-1}\Big(\sum\limits_{j=1}^{m}\frac{1}{m}\psi^\prime(A_j)\Big)
\enspace .\label{eq32a}
\end{equation}
\end{enumerate}
\end{thm}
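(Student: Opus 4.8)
The plan is to dispatch the three parts in turn, each time reducing the minimisation to a first-order condition, but with part (i) handled by a purely algebraic identity rather than by calculus, since a Bregman divergence need not be convex in its \emph{second} argument.

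For part (i) I would expand the summands via $\Phi(A_j,X)=\varphi(A_j)-\varphi(X)-D\varphi(X)(A_j-X)$, write $\overline{A}=\sum_{j=1}^m\frac1m A_j$, and sum. Because $\sum_{j=1}^m\frac1m(A_j-\overline{A})=0$, the terms carrying $D\varphi(X)$ collapse onto $D\varphi(X)(\overline{A}-X)$, and a short computation produces the Pythagorean-type decomposition
\[
\sum_{j=1}^m \frac1m\,\Phi(A_j,X)=\sum_{j=1}^m \frac1m\,\Phi(A_j,\overline{A})+\Phi(\overline{A},X).
\]
The first summand does not depend on $X$, while the second is nonnegative and vanishes exactly when $X=\overline{A}$, by the defining property of the Bregman divergence of a strictly convex function. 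Since $\overline{A}\in\mathbb{P}$, this identifies the unique minimiser as the arithmetic mean, using only differentiability and strict convexity of $\varphi$ and no Legendre hypothesis whatsoever.

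For part (ii) I would instead expand $\Phi(X,A_j)=\varphi(X)-\varphi(A_j)-D\varphi(A_j)(X-A_j)$, drop the terms constant in $X$, and collect the linear part. Setting $\overline{g}=\sum_{j=1}^m\frac1m\nabla\varphi(A_j)$, the objective reduces to $G(X)=\varphi(X)-\langle \overline{g},X\rangle+\mathrm{const}$, which is strictly convex. Existence of a unique minimiser lying in the interior $\mathbb{P}=\interior\dom\varphi$ is precisely what \Cref{lem-new} delivers from the Legendre-type assumption; at that interior point the gradient of $G$ vanishes, i.e. $\nabla\varphi(X)=\overline{g}$. Since $\nabla\varphi$ is a homeomorphism of $\interior\dom\varphi$ onto $\interior\dom\varphi^\star$, I invert to obtain the formula \eqref{eq32}.

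For part (iii) the essential computation is the gradient of the spectral trace function: for $\varphi(X)=\tr\psi(X)$ one has $D\varphi(X)(Y)=\tr(\psi'(X)Y)$, hence $\nabla\varphi(X)=\psi'(X)$, the matrix obtained by applying $\psi'$ to $X$. Substituting into \eqref{eq32} gives the claimed formula \eqref{eq32a} at the formal level. The subtlety I would flag is that $\tr\psi$ \emph{need not} be of Legendre type (for example $\psi(x)=x^2$ gives $\nabla\varphi=2X$, bounded at $\boundary\mathbb{P}$ and violating condition \eqref{itdef-iv}), so part (ii) cannot simply be quoted. Instead I would verify directly that the candidate $X=(\psi')^{-1}(\overline{g})$, with $\overline{g}=\sum_{j=1}^m\frac1m\psi'(A_j)$, lies in $\mathbb{P}$: strict convexity of $\psi$ makes $\psi'$ strictly increasing with $\psi'(\positivereals)$ an open interval, and by the min--max principle every eigenvalue of the convex combination $\overline{g}$ lies strictly inside this interval, so $(\psi')^{-1}(\overline{g})$ is positive definite. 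As this $X$ satisfies $\nabla\varphi(X)=\overline{g}$ and $G$ is strictly convex on $\mathbb{P}$, the interior critical point is the unique minimiser.

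The main obstacle is attainment in part (ii). Strict convexity of $G$ yields uniqueness, but on the open cone $\mathbb{P}$ a strictly convex function may fail to attain its infimum, decreasing only as $X$ approaches $\boundary\mathbb{P}$ or escapes to infinity; excluding this is exactly the role of the Legendre-type hypothesis through \Cref{lem-new}, which explains why that hypothesis is needed in (ii) but not in (i), whose algebraic identity sidesteps any attainment question. A secondary nontrivial point, specific to (iii), is the spectral gradient identity $\nabla(\tr\psi)(X)=\psi'(X)$ together with the eigenvalue localisation that keeps the closed-form barycentre inside $\mathbb{P}$ even when $\tr\psi$ is not of Legendre type.
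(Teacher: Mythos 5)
Your proposal is correct and takes essentially the same route as the paper: part (i) is the paper's expansion-plus-strict-convexity argument (your Pythagorean identity is exactly the paper's strict inequality, since the excess term is $\Phi(\overline{A},X)$), part (ii) invokes \Cref{lem-new} for attainment in the interior and then inverts $\nabla\varphi$ via the first-order condition, and part (iii) solves $\psi'(X)=\sum_j\frac1m\psi'(A_j)$ by the same spectral localisation of $\sum_j\frac1m\psi'(A_j)$ in $J=\psi'(\positivereals)$. Your flagged subtlety---that $\tr\psi$ need not be of Legendre type (e.g.\ $\psi(x)=x^2$), so (iii) cannot be deduced from (ii)---is precisely the point the paper makes in the remark following its proof and in its Appendix B.
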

\begin{proof}
(\ref{it-i}). Since $\Phi$ is given by \eqref{eq12},
\begin{eqnarray*}
\sum\limits_{j=1}^{m}\frac{1}{m}\Phi(A_j,X) & = & \sum\limits_{j=1}^{m}\frac{1}{m}\varphi(A_j)-\varphi(X)-\sum\limits_{j=1}^{m}\frac{1}{m}D\varphi(X)(A_j-X)\\
 & = & \sum\limits_{j=1}^{m}\frac{1}{m}\varphi(A_j)-\varphi(X)-D\varphi(X)\left(\sum\limits_{j=1}^{m}\frac{1}{m}A_j-X\right)\\
 & = & \sum\limits_{j=1}^{m}\frac{1}{m}\varphi(A_j)-\varphi(X)-D\varphi(X)(\overline{A}-X),
\end{eqnarray*}
where $\overline{A}$ denotes the arithmetic mean $\sum\limits_{j=1}^{m}\frac{1}{m}A_j.$ Hence
$$\sum\limits_{j=1}^{m}\frac{1}{m}\Phi(A_j,\overline{A})=\sum\limits_{j=1}^{m}\frac{1}{m}\varphi(A_j)-\varphi(\overline{A}).$$
Since $\varphi$ is strictly convex, for every $X\ne\overline{A}$
$$\varphi(\overline{A})-\varphi(X)>D\varphi(X)(\overline{A}-X).$$
This implies that
$$\sum\limits_{j=1}^{m}\frac{1}{m}\Phi(A_j,X)>\sum\limits_{j=1}^{m}\frac{1}{m}\Phi(A_j,\overline{A})$$
which shows that $\overline{A}$ is the unique minimiser of the problem \eqref{eq30}.
\vskip.1in
(\ref{it-ii}). Let $\Psi$ be the map from $\mathbb{P}$ to $\mathbb{R}_+$ defined as
$$\Psi(X)=\sum\limits_{j=1}^{m}\frac{1}{m}\Phi(X,A_j).$$
Then
$$D\Psi(X)(Z)=D\varphi(X)(Z)-\sum\limits_{j=1}^{m}\frac{1}{m}D\varphi(A_j)(Z).$$
\Cref{lem-new} shows that the minimum of the map $\Psi$ on the set $\mathbb{P}$
is achieved at some point $X \in \mathbb{P}$, and by the first order
optimality condition, $D\Psi(X)=0$, showing that $X$ satisfies~\eqref{eq32}.

\vskip.1in

(\ref{it-iii}). If $\psi$ is a differentiable convex function on $\positivereals$
and $\Phi$ is the Bregman divergence corresponding to $\varphi=\tr\psi,$
then $\nabla\varphi(X)=\psi^\prime(X).$ Hence, to show that the
minimisation problem \eqref{eq31} has a solution, 
it suffices to show that the first order optimality condition
\begin{align}
 \psi^\prime(X) = \sum\limits_{j=1}^{m}\frac{1}{m}\psi^\prime(A_j)
\label{e-lasteq}
\end{align}
is satisfied for some $X$ in $\mathbb{P}$. Since $\psi$ is strictly convex,
as noted above, $\psi'$ is strictly increasing and is a homeomorphism from $\positivereals$ to the interval $J:=\psi'(\positivereals)$. 
The spectrum of each matrix $\psi'(A_j)$ belongs to $J$, and so
the spectrum of $\sum\limits_{j=1}^{m}\frac{1}{m}\psi^\prime(A_j)$
also belongs to $J$, which implies that~\eqref{e-lasteq} is solvable.
\end{proof}
The assumption that $\varphi$ is of Legendre type is not needed
in the tracial case (statement~(\ref{it-iii})). \Cref{th-cex}
in \Cref{appendixb} shows that this assumption cannot be dispensed
with in the case of statement~(\ref{it-ii}). \stephane{Para added}

The much studied convex function 
\begin{equation}
\varphi(x)=x\log x-x,\label{eq33}
\end{equation}
on $\mathbb{R}_+$ leads to the Bregman divergence
\begin{equation}
\Phi(x,y)=x(\log x-\log y)-(x-y).\label{eq34}
\end{equation}
This is called the {\it Kullback-Leibler divergence}.
Since $\varphi^\prime(x)=\log x,$
the solution of the minimisation problem \eqref{eq28} in this case is
\begin{equation*}
\mu_\Phi(x_1,\ldots,x_m)=\exp\left(\frac{1}{m}\sum\limits_{j=1}^{m}\varphi(x_j)\right)=\prod\limits_{j=1}^{m}x_j^{1/m},
\end{equation*}
the geometric mean of $x_1,\ldots,x_m.$
\vskip.2in
As a matrix analogue of \eqref{eq33} one considers the function on $\mathbb{P}$ defined as
\begin{equation}
\varphi(A)=\tr(A\log A-A).\label{eq35}
\end{equation}
The associated Bregman divergence then is
\begin{equation}
\Phi(A,B)=\tr\, A(\log A-\log B)-\tr(A-B).\label{eq36}
\end{equation}
(See \cite{am}, p.12).
The quantity 
\begin{equation}
S(A|B)=\tr\, A(\log A-\log B),\label{eq37}
\end{equation}
is called the {\it relative entropy}
and has been of great interest in quantum information.
Given $A_1,\ldots,A_m$ in $\mathbb{P},$
their barycentre with respect to the divergence $\Phi,$
i.e., the solution of the minimisation problem \eqref{eq31} is the log Euclidean mean
\begin{equation}
\mathcal{L}(A_1,\ldots,A_m)=\exp\left(\frac{1}{m}\sum\limits_{j=1}^{m}\log A_j\right).\label{eq38}
\end{equation}
\vskip.2in
It is also of interest to compute the {\it variance} of the points $A_1,\ldots,A_m$
with respect to $\Phi,$ i.e.,
the minimum value of the objective function in \eqref{eq31}.
This is the quantity 
\begin{equation}
\sigma_\Phi^2=\sum\limits_{j=1}^{m}\frac{1}{m}\Phi(\mu_\Phi, A_j).\label{eq39}
\end{equation}
For the divergence $\Phi$ in \eqref{eq36}, $\mu_\Phi$ is the log Euclidean mean $\mathcal{L}$ given in \eqref{eq38}.
So 
\begin{eqnarray*}
\sigma_\Phi^2 & = & \frac{1}{m}\sum\limits_{j=1}^{m}\Phi(\mathcal{L}, A_j)\\
 & = & \frac{1}{m}\sum\limits_{j=1}^{m}\left[\tr\mathcal{L}(\log\mathcal{L}-\log A_j)-\tr(\mathcal{L}-A_j)\right]\\
 & = & \frac{1}{m}\tr\left\{\sum\limits_{j=1}^{m}\left[\mathcal{L}\left(\frac{1}{m}\sum\limits_{k=1}^{m}\log A_k-\log A_j\right)-(\mathcal{L}-A_j)\right]\right\}\\
 & = & -\tr\mathcal{L}+\frac{1}{m}\tr\sum\limits_{j=1}^{m} A_j.
\end{eqnarray*}
In other words 
\begin{equation}
\sigma_\Phi^2=\tr\mathcal{A}(A_1,\ldots,A_m)-\tr\mathcal{L}(A_1,\ldots,A_m),\label{eq40}
\end{equation}
the difference between the traces of the arithmetic and the log Euclidean means of $A_1,\ldots,A_m.$
\vskip.2in
In particular, the divergence $\Phi_4(A,B)$ can be characterised using \eqref{eq40},
as the minimum value 
\begin{equation}
{\underset{X>0}{\min}}\left[\Phi(X,A)+\Phi(X,B)\right],\label{eq41}
\end{equation}
where $\Phi$ is defined by \eqref{eq36}.
Using this characterisation we can show that the function $\Phi_4(A,B)$ is strictly convex in each of the variables separately.  To this end,  we recall
the following lemma of convex analysis, showing that the ``marginal''
of a jointly convex function is convex; compare with Proposition 2.22 
of~\cite{rockafellarandwets} where a similar result (without the strictness conclusion) is provided.
\todo[inline]{SG: added last sentence with a ref to \cite{rockafellarandwets} as this is known in convex analysis}

\begin{lem}\label{lem4}
Let $f(x,y)$ be a jointly convex function which is strictly convex in each of its variables separately.
Suppose for each $a,b$
\begin{equation}
g(a,b)={\underset{x}{\min}}\left[ f(x,a)+f(x,b)\right],\label{eq42}
\end{equation}
exists. Then the function $g(a,b)$ is jointly convex, and is strictly convex in each of the variables separately.
\end{lem}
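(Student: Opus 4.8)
The plan is to drive both conclusions with a single \emph{minimizer-comparison} construction. The hypothesis guarantees that for every pair the infimum is attained, so I may fix actual minimizers. The idea is: given a convex combination of the ``outer'' arguments, pick minimizers for the two endpoint problems, and feed their convex combination into the minimization defining $g$ as a merely feasible candidate. Bounding that candidate by the (joint) convexity of $f$ produces the desired convexity inequality for $g$; strictness will then be extracted by a dichotomy on whether the two chosen minimizers coincide.

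First I would establish joint convexity. Fix $(a_0,b_0),(a_1,b_1)$ and $\lambda\in(0,1)$, write $a_\lambda=(1-\lambda)a_0+\lambda a_1$ and $b_\lambda=(1-\lambda)b_0+\lambda b_1$, and choose minimizers $x_0,x_1$ so that $g(a_i,b_i)=f(x_i,a_i)+f(x_i,b_i)$. With $x_\lambda=(1-\lambda)x_0+\lambda x_1$, feasibility of $x_\lambda$ and the joint convexity of $f$ applied to each summand give
\[
g(a_\lambda,b_\lambda)\le f(x_\lambda,a_\lambda)+f(x_\lambda,b_\lambda)\le (1-\lambda)g(a_0,b_0)+\lambda g(a_1,b_1).
\]
This is exactly the partial-minimization statement recalled in Proposition 2.22 of~\cite{rockafellarandwets}, which I would either invoke or record by the displayed chain above; at this level only ordinary (non-strict) joint convexity of $f$ is used, consistently with the fact that $f$ need not be strictly jointly convex.

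The hard part is the strictness, which I treat separately in each variable (the two being symmetric). Fix $b$, take $a_0\neq a_1$, and run the same construction with $b_0=b_1=b$, choosing minimizers $x_0,x_1$ of the $(a_0,b)$ and $(a_1,b)$ problems. The inequality $g(a_\lambda,b)\le f(x_\lambda,a_\lambda)+f(x_\lambda,b)$ holds as before; the point is to make one of the two majorizations strict. Here I split into cases. If $x_0\neq x_1$, then strict convexity of $f$ in its first variable forces $f(x_\lambda,b)<(1-\lambda)f(x_0,b)+\lambda f(x_1,b)$, so the sum is strict. If instead $x_0=x_1=:x^\ast$, then, since $a_0\neq a_1$, strict convexity of $f$ in its second variable forces $f(x^\ast,a_\lambda)<(1-\lambda)f(x^\ast,a_0)+\lambda f(x^\ast,a_1)$, again making the sum strict. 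In both branches I obtain
\[
g(a_\lambda,b)<(1-\lambda)g(a_0,b)+\lambda g(a_1,b),
\]
which is strict convexity in the first variable, and the same argument with the roles reversed handles the second.

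The one subtlety I would flag is that this argument never requires the minimizer to be unique: any fixed choice of $x_0,x_1$ falls into exactly one branch of the dichotomy, and each branch independently yields the strict inequality, so the conclusion is insensitive to how ties among minimizers are broken. The main obstacle is therefore not the convexity bookkeeping but recognizing that strictness can fail to come from the ``$x$-convexity'' term precisely when the endpoint minimizers agree, in which case the displacement $a_0\neq a_1$ must be exploited through strict convexity in the second argument instead.
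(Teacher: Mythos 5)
Your proof is correct and follows essentially the same route as the paper's: feed the convex combination of endpoint minimizers into the minimization as a feasible candidate to get joint convexity, then prove strictness by the same dichotomy (minimizers distinct: use strict convexity in $x$; minimizers equal: use strict convexity in the displaced variable). The only cosmetic difference is that you work with general weights $\lambda$ where the paper uses midpoints.
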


\begin{proof}
Given $a_1,a_2,b_1,b_2,$
choose $x_1$ and $x_2$ such that 
$$g(a_1,b_1)=f(x_1,a_1)+f(x_1,b_1)$$
and
$$g(a_2,b_2)=f(x_2,a_2)+f(x_2,b_2).$$
Then
\begin{align*}
 & g\left(\frac{a_1+a_2}{2},\frac{b_1+b_2}{2}\right)\\
 & \ \ \le f\left(\frac{x_1+x_2}{2},\frac{a_1+a_2}{2}\right)+f\left(\frac{x_1+x_2}{2},\frac{b_1+b_2}{2}\right)\\
 & \ \ \le \frac{1}{2}\left[f(x_1,a_1)+f(x_2,a_2)+f(x_1,b_1)+f(x_2,b_2)\right]\\
 & \ \ = \frac{1}{2}\left[g(a_1,b_1)+g(a_2,b_2)\right].
\end{align*}
This shows that $g$ is jointly convex.
Now we show that it is strictly convex in the first variable.
\vskip.1in
Let $a_1,a_2,b$ be any three points with $a_1\ne a_2.$
Choose $x_1$ and $x_2$ such that
$$g(a_1,b)=f(x_1,a_1)+f(x_1,b)$$
and
$$g(a_2,b)=f(x_2,a_2)+f(x_2,b).$$
Two cases arise.
If $x_1=x_2=x,$ then
\begin{eqnarray*}
f\left(\frac{x_1+x_2}{2}, \frac{a_1+a_2}{2}\right) & = & f\left(x,\frac{a_1+a_2}{2}\right)\\
 & < & \frac{1}{2}\left[f(x,a_1)+f(x,a_2)\right],
\end{eqnarray*}
because of strict convexity of $f$ in the second variable.
This implies that
\begin{eqnarray*}
g\left(\frac{a_1+a_2}{2},b\right) & < &\frac{1}{2}\left[f(x,a_1)+f(x,a_2)+f(x,b)+f(x,b)\right]\\
 & = & \frac{1}{2}\left[g(a_1,b)+g(a_2,b)\right].
\end{eqnarray*}
If $x_1\ne x_2,$ then by strict convexity of $f$ in the first variable,
$$f\left(\frac{x_1+x_2}{2},b\right)<\frac{1}{2}\left[f(x_1,b)+f(x_2,b)\right],$$
and by joint convexity of $f$
$$f\left(\frac{x_1+x_2}{2},\frac{a_1+a_2}{2}\right)\le \frac{1}{2}\left[f(x_1,a_1)+f(x_2,a_2)\right].$$
Adding the last two inequalities we get
$$g\left(\frac{a_1+a_2}{2},b\right)<\frac{1}{2}\left[g(a_1,b)+g(a_2,b)\right].$$
Thus $g(a,b)$ is strictly convex in the first variable,
and by symmetry it is so in the second variable.
\end{proof}

\begin{thm}\label{thm5}
For each $A,$ the function $f(X)=\Phi_4(X,A)$
is strictly convex on $\mathbb{P}.$
\end{thm}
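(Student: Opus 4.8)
The plan is to obtain \Cref{thm5} from the variational description~\eqref{eq41} of $\Phi_4$ together with \Cref{lem4}. Formula~\eqref{eq41} writes
$$\Phi_4(A,B)=\min_{X>0}\big[\Phi(X,A)+\Phi(X,B)\big],$$
where $\Phi$ is the relative-entropy Bregman divergence~\eqref{eq36}. This is exactly the expression $g(a,b)=\min_x[f(x,a)+f(x,b)]$ of \Cref{lem4} with $f=\Phi$, so it suffices to check that $\Phi$ satisfies the hypotheses of that lemma: joint convexity, and strict convexity in each variable separately. The lemma will then give that $\Phi_4$ is strictly convex in each variable separately, and the case of the first variable is precisely the assertion that $X\mapsto\Phi_4(X,A)$ is strictly convex.

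For joint convexity I would write $\Phi(A,B)=S(A|B)-\tr(A-B)$ with $S(A|B)=\tr A(\log A-\log B)$ the relative entropy~\eqref{eq37}; the second term is linear, so joint convexity of $\Phi$ is just the classical joint convexity of the quantum relative entropy, which I would invoke as a known fact.

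The two separate strict convexities I would establish by showing the Hessian is positive definite. In the first variable, $\Phi(X,A)=\tr(X\log X)-\tr(X\log A)-\tr X+\tr A$; every term but the first is affine in $X$, and $X\mapsto\tr(X\log X)$ is strictly convex, so $\Phi(\cdot,A)$ is strictly convex. In the second variable, $\Phi(A,Y)=\tr(A\log A)-\tr(A\log Y)-\tr A+\tr Y$, so modulo affine terms the point is that $Y\mapsto-\tr(A\log Y)$ is strictly convex. Using $D\log(Y)(Z)=\int_0^\infty(s+Y)^{-1}Z(s+Y)^{-1}\,\d s$ and differentiating once more, one gets
$$-D^2\big(\tr A\log Y\big)(Z,Z)=2\int_0^\infty\tr\big[A\,(s+Y)^{-1}Z(s+Y)^{-1}Z(s+Y)^{-1}\big]\,\d s.$$
Writing $W=(s+Y)^{-1}>0$, $B=W^{1/2}AW^{1/2}$ and $M=W^{1/2}ZW^{1/2}$ (Hermitian), each integrand equals $\tr(BM^2)=\|B^{1/2}M\|_2^2\ge0$, and it is strictly positive for $Z\neq0$ since $B$ and $W$ are invertible. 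Hence the Hessian of $\Phi(A,\cdot)$ is positive definite and $\Phi(A,\cdot)$ is strictly convex.

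With these three properties verified, \Cref{lem4} applies and shows that $\Phi_4$ is strictly convex in each variable separately, proving the theorem. I expect the only real obstacle to be the strict convexity of $\Phi$ in the second variable, which---unlike the first---is not the trace of a scalar convex function of a single matrix; the explicit Hessian computation above handles it cleanly. Alternatively, one may argue from the strict operator concavity of $\log$: equality in concavity forces $\log\frac{Y_1+Y_2}{2}=\frac12(\log Y_1+\log Y_2)$ because $A>0$ and $\tr AM=0$ with $M\ge0$ gives $M=0$, and strictness of operator concavity then yields $Y_1=Y_2$.
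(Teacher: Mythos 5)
Your proof is correct and follows the same route as the paper: the characterisation \eqref{eq41} of $\Phi_4$ as the minimal value of $X\mapsto \Phi(X,A)+\Phi(X,B)$ for the entropic Bregman divergence \eqref{eq36}, fed into \Cref{lem4}, with the hypotheses of that lemma supplied by convexity properties of the relative entropy. The only difference is in how those hypotheses are checked. The paper verifies them by citation: joint convexity of $S(A|B)$ from \cite{rbh}, and separate strict convexity from the fact (quoted from \cite{bjl2}) that $\tr\varphi(X)$ is strictly convex whenever $\varphi$ is a strictly convex scalar function. That quoted fact covers the first variable, where $\Phi(X,A)$ is $\tr(X\log X)$ plus terms affine in $X$, but it does not literally cover the second variable, since $Y\mapsto -\tr(A\log Y)$ is not of the form $\tr\varphi(Y)$; the paper leaves this point implicit. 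Your Hessian computation fills exactly this step, and it is correct: $-D^2\bigl(\tr A\log Y\bigr)(Z,Z)=2\int_0^\infty \tr\bigl[A(s+Y)^{-1}Z(s+Y)^{-1}Z(s+Y)^{-1}\bigr]\,\d s$, and with $W=(s+Y)^{-1}$, $B=W^{1/2}AW^{1/2}$, $M=W^{1/2}ZW^{1/2}$ each integrand equals $\tr(BM^2)=\|B^{1/2}M\|_2^2$, strictly positive for $Z\neq 0$ because $B$ and $W$ are invertible. (Your alternative argument via strict operator concavity of $\log$, using that $A>0$ together with $\tr AM=0$, $M\geq 0$ forces $M=0$, is equally valid.) So the proposal is sound, and at the one step where the paper is terse it is in fact more complete.
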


\begin{proof}
One of the fundamental, and best known,
properties of the relative entropy $S(A|B)$ is that it is jointly convex
function of $A$ and $B.$
(See, e.g., Section IX.6 in \cite{rbh}.)
It is also known that if $\varphi$ is strictly convex function on $\mathbb{R}_+,$
then the function $\tr\, \varphi(X)$ is strictly convex on $\mathbb{P}.$
(See, e.g., Theorem 4 in \cite{bjl2}.)
It follows from this that $S(A|B)$ is strictly convex in each of the variables separately.
Combining these properties of $S(A|B),$
Lemma \ref{lem4} and the characterisation of $\Phi_4(A,B)$ as the
minimum value in \eqref{eq41} we obtain Theorem \ref{thm5}.
\end{proof}

It might be pertinent to add here that the question of equality in the joint convexity inequality 
\begin{equation}
S\left(\frac{A_1+A_2}{2}\big|\frac{B_1+B_2}{2}\right)\le\frac{S(A_1|B_1)+S(A_2|B_2)}{2},\label{eq43}
\end{equation}
has been addressed in \cite{hbmp} and \cite{jr}.
In \cite{jr} Jencova and Ruskai show that the equality holds in \eqref{eq43} if and only if
\begin{eqnarray*}
\log(A_1+A_2)-\log(B_1+B_2) & = & \log A_1-\log B_1\\
  & = & \log A_2-\log B_2.
	\end{eqnarray*}
	On the other hand, Hiai et al \cite{hbmp}
	show that equality holds in \eqref{eq43} if and only if
	\begin{eqnarray*}
	(B_1+B_2)^{-1/2}(A_1+A_2)(B_1+B_2)^{-1/2} & = & B_1^{-1/2}A_1B_1^{-1/2}\\
	 & = & B_2^{-1/2}A_2B_2^{-1/2}.
	\end{eqnarray*}
	We are thankful to F. Hiai for making us aware of these results.
	
	\section{Barycentres}
	
	If $f$ is a convex function on an open convex set,
	then a critical point of $f$ is the global minimum of $f.$
	If $f$ is strictly convex, then $f$ can have at most one such critical point.
	In this section we show that for $d=d_3$ and $d_4,$
	the objective function in \eqref{eq13} has a critical point,
	and hence in both cases the problem \eqref{eq13} has a unique solution.
	
\begin{thm}\label{thm6}
	When $d=d_3,$ the minimum in \eqref{eq13} is attained at a unique point $X$ which is the solution of the matrix equation
	\eqref{eq17r}
\begin{equation*}
	X^2=\frac{2}{\pi}\sum\limits_{j=1}^{m}w_j\int\limits_{0}^{\infty}\left(\lambda X^{-1}+A_j^{-1}\right)^{-2}\sqrt{\lambda}
	\textrm{d}\lambda.
	\end{equation*}
	This minimiser is the $1/2$-power mean $Q_{1/2}$ given by \eqref{eq13a}
	if $Q_{1/2}$ commutes with every $A_j.$
	In particular, the minimiser is $Q_{1/2}$ if
	\begin{itemize}
	\item[(i)] all $A_j$'s commute, or
	\item[(ii)] $Q_{1/2}=I.$
	\end{itemize}
	\end{thm}
	
	\begin{proof}
	For a fixed positive definite matrix $A,$ define the map $G_A$ as
	$$G_A(X)=A\# X.$$
	By Proposition \ref{prop-new2}, we have
	$$DG_A(X)(Y)=\int\limits_{0}^{\infty}(\lambda+XA^{-1})^{-1}Y(\lambda+A^{-1}X)^{-1}\d\nu(\lambda).$$
	
	The objective function in \eqref{eq13} is 
	$$f(X)=\sum\limits_{j=1}^{m}w_j\Phi_3(X,A_j).$$
	Using the definition of $\Phi_3$ we have
	$$Df(X)(Y)=\tr\left(Y-2\sum\limits_{j=1}^{m}w_jDG_{A_j}(X)(Y)\right).$$
	Then using the above expression for $DG_{A_j}(X)$  we see that 
	\begin{eqnarray*}
	Df(X)(Y) & = & \tr\left(Y-2\sum\limits_{j=1}^{m}w_j\int\limits_{0}^{\infty}(\lambda+XA_j^{-1})^{-1}Y(\lambda+A_j^{-1}X)^{-1}\d\nu(\lambda)\right)\\
	 & = & \tr\left(\left(I-2\sum\limits_{j=1}^{m}w_j\int\limits_{0}^{\infty}\left( (\lambda+XA_j^{-1})(\lambda+A_j^{-1}X)\right)^{-1}\d\nu(\lambda)\right)Y\right).
	\end{eqnarray*}
	At the last step above we use the cyclicity of the trace function.
Hence the critical point of $f$ is the matrix $X_0$ if and only if $X_0$ satisfies the matrix equation
\begin{equation}
I=2\sum\limits_{j=1}^{m}w_j\int\limits_{0}^{\infty}\left((\lambda+XA_j^{-1})(\lambda+A_j^{-1}X)\right)^{-1}\d\nu(\lambda).\label{eq17br}
\end{equation}
Taking congruence with $X$ on both sides we see that \eqref{eq17br} is equivalent to \eqref{eq17r}.
\vskip.1in
We now show that there exists a positive definite matrix $X_0$ that satisfies \eqref{eq17r}.
Let $\alpha,\beta>0$ such that $\alpha I\le A_j\le \beta I$ for all $j=1,\ldots,m,$
and let $\mathcal{K}$ be the compact set $\mathcal{K}=\{X\in\mathbb{P}(n):\alpha I\le X\le \beta I\}.$
Define the map $F:\mathcal{K}\to\mathbb{P}(n)$ as
$$F(X)=\left[2\sum\limits_{j=1}^{m}w_j\int\limits_{0}^{\infty}(\lambda X^{-1}+A_j^{-1})^{-2}\d\nu(\lambda)\right]^{1/2}.$$
Since $X,A_j\in\mathcal{K},$
$(\lambda+1)\alpha^{-1}\ge (\lambda X^{-1}+A_j^{-1})\ge (\lambda+1)\beta^{-1}.$
Thus we have $\alpha^2/(\lambda+1)^2\le (\lambda X^{-1}+A_j^{-1})^{-2}\le \beta^2/(\lambda+1)^2.$
We know that $\int_{0}^{\infty}\d\nu(\lambda)/(\lambda+1)^2=1/2.$
This gives $F(X)\in\mathcal{K}.$
By the Brouwer fixed point theorem,
we get that $F$ has a fixed point $X_0$ in $\mathcal{K}.$
This fixed point $X_0$ is the solution of \eqref{eq17r}.
\vskip.1in
Suppose $Q_{1/2}$ commutes with every $A_j,$ $1\le j\le m.$
	We show that $Q_{1/2}$ satisfies \eqref{eq17r}.
		Differentiating \eqref{eq2b} we get
	\begin{equation}
	\frac{1}{2}x^{-1/2}=\int\limits_{0}^{\infty}\frac{1}{(\lambda+x)^2}\d\nu(\lambda).\label{eq17cr}
	\end{equation}
	Using $Q_{1/2}A_j^{-1}=A_j^{-1}Q_{1/2}$ in \eqref{eq17br} and using \eqref{eq17cr}
	we get
\begin{eqnarray*}
I &=& Q_{1/2}^{1/2}Q_{1/2}^{-1/2}=\sum\limits_{j=1}^{m}w_j\left(A_j^{1/2}Q_{1/2}^{-1/2}\right)\\
 &= & \sum\limits_{j=1}^{m}w_j\left(A_jQ_{1/2}^{-1}\right)^{1/2}\\
 &=& 2\sum\limits_{j=1}^{m}w_j\int\limits_{0}^{\infty}\left(\lambda+A_j^{-1}Q_{1/2}\right)^{-2}\d\nu(\lambda)\\
&=& 2\sum\limits_{j=1}^{m}w_j\int\limits_{0}^{\infty}\left((\lambda+Q_{1/2}A_j^{-1})(\lambda+A_j^{-1}Q_{1/2})\right)^{-1}\d\nu(\lambda).
\end{eqnarray*}
This proves the second statement of the theorem.
If (i) holds, it follows from \eqref{eq13a} that $Q_{1/2}$ commutes with $A_j$'s.
The same is trivially true if (ii) holds.
\end{proof}

	\begin{thm}\label{thm7}
	When $d=d_4$ the minimum in \eqref{eq13} is attained
	at a unique point $X$ which satisfies the matrix equation \eqref{eq18a}
	\begin{equation*}
	X=\sum\limits_{j=1}^{m}w_j\mathcal{L}(X,A_j).
	\end{equation*}
	\end{thm}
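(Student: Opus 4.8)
The plan is to leverage the strict convexity already proved in Theorem~\ref{thm5}. For each fixed $A_j$ the map $X\mapsto\Phi_4(X,A_j)$ is strictly convex on $\mathbb{P}$, so the objective $f(X):=\sum_{j=1}^m w_j\Phi_4(X,A_j)$ is strictly convex; hence $f$ has at most one critical point, and any critical point is automatically its unique global minimiser. The work then divides into two tasks: (a) computing $Df$ and showing that $Df(X)=0$ is \emph{equivalent} to the equation \eqref{eq18a}, and (b) proving that a minimiser (equivalently a critical point) actually exists inside $\mathbb{P}$.

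For task (a), recall $\Phi_4(X,A)=\tr(X+A)-2\tr\mathcal{L}(X,A)$, so that $Df(X)(Y)=\sum_j w_j\tr Y-2\sum_j w_j D_X(\tr\mathcal{L}(X,A_j))(Y)$. Writing $L=\tfrac12(\log X+\log A)$ and combining $D(\tr\exp)(M)(H)=\tr(\e^{M}H)$ with the chain rule and the integral representation $D\log(X)(Y)=\int_0^\infty(\lambda+X)^{-1}Y(\lambda+X)^{-1}\,\d\lambda$, I obtain
\[
D_X(\tr\mathcal{L}(X,A))(Y)=\tfrac12\tr(\mathcal{L}(X,A)\,D\log(X)(Y))=\tfrac12\tr(D\log(X)(\mathcal{L}(X,A))\,Y),
\]
the second equality because $D\log(X)$ is self-adjoint for the trace inner product (its Daleckii--Krein kernel $\tfrac{\log x_i-\log x_j}{x_i-x_j}$ is symmetric). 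Since $\sum_j w_j=1$ and $D\log(X)(X)=I$, this gives
\[
Df(X)(Y)=\tr\Big(D\log(X)\big(X-\textstyle\sum_j w_j\mathcal{L}(X,A_j)\big)\,Y\Big).
\]
As $D\log(X)$ is invertible, $Df(X)=0$ for all Hermitian $Y$ if and only if $X=\sum_j w_j\mathcal{L}(X,A_j)$, which is exactly \eqref{eq18a}. In particular a solution of \eqref{eq18a} is a critical point, hence the unique minimiser.

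For task (b), existence, I would first note coercivity: by \eqref{eq15}, $\Phi_4(X,A_j)\ge\Phi_1(X,A_j)=\|X^{1/2}-A_j^{1/2}\|_2^2$, so $f(X)\ge\sum_j w_j\|X^{1/2}-A_j^{1/2}\|_2^2\to+\infty$ as $\tr X\to\infty$, whence sublevel sets have bounded trace. Next, $\tr\mathcal{L}(X,A_j)$ extends continuously to the closed cone of positive semidefinite matrices (the eigenvalues of $\log X$ that diverge to $-\infty$ are damped by the exponential), so $f$ extends to a finite continuous function there and attains its minimum over a compact sublevel set at some positive semidefinite $X_*$. The crux, and the main obstacle, is to show $X_*$ is nonsingular. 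This I would settle by a barrier estimate: as $X$ approaches the boundary of $\mathbb{P}$, the derivative of $\Phi_4(\cdot,A_j)$ along any direction $V$ pointing into $\mathbb{P}$ tends to $-\infty$. Indeed that directional derivative contains the term $-\tr(\mathcal{L}(X,A_j)\,D\log(X)(V))$, and $D\log(X)(V)$ blows up through its entries $\tfrac{\log x_i-\log x_j}{x_i-x_j}$ and $1/x_i$ as the smallest eigenvalue of $X$ tends to $0$; already the commuting case, where $\partial_{x_i}\Phi_4=1-\sqrt{a_i/x_i}\to-\infty$, exhibits the mechanism. A boundary minimiser would then admit an inward direction of strict descent, which is impossible. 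Hence $X_*\in\mathbb{P}$, it is a critical point, and by task (a) it solves \eqref{eq18a}; strict convexity makes it the unique solution.

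The two genuinely delicate points are the continuity of $\tr\mathcal{L}$ up to the singular boundary and the precise form of the barrier estimate ruling out boundary minimisers; the gradient computation of task (a) is then routine. An alternative route to \eqref{eq18a} that avoids computing $D\log$ uses the characterisation $\Phi_4(X,A_j)=\min_{Z>0}[\Phi(Z,X)+\Phi(Z,A_j)]$ from \eqref{eq41}: the double minimisation $\min_X f(X)=\min_{X,Z_1,\dots,Z_m}\sum_j w_j[\Phi(Z_j,X)+\Phi(Z_j,A_j)]$ has, at its optimum, $Z_j=\mathcal{L}(X,A_j)$ from the inner minimisation and $X=\sum_j w_j Z_j$ by Theorem~\ref{thm3}(\ref{it-i}) applied to the relative-entropy divergence $\Phi$, which again yields $X=\sum_j w_j\mathcal{L}(X,A_j)$.
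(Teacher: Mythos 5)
Your overall reduction---strict convexity from Theorem~\ref{thm5} gives at most one critical point, so it suffices to (a) identify critical points of $f$ with solutions of \eqref{eq18a} and (b) produce a critical point---is exactly the paper's strategy, and your task (a) is essentially the paper's computation. The paper's first-order condition $\int_0^\infty(\lambda+X)^{-1}Z(\lambda+X)^{-1}\,\d\lambda=I$, with $Z=\sum_j w_j\mathcal{L}(A_j,X)$, is precisely your condition $D\log(X)(Z)=I$; the paper solves it entrywise in a basis diagonalising $X$ to conclude $Z=X$, while you use self-adjointness of $D\log(X)$ and $D\log(X)(X)=I$. These are equivalent and both correct.

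The gap is in task (b), existence. The two claims you yourself flag as delicate are genuinely unproved and are not routine. Continuity of $X\mapsto\tr\mathcal{L}(X,A)$ up to the singular boundary depends on how the eigenframe of $X$ interacts with that of $A$ as eigenvalues of $\log X$ escape to $-\infty$; ``the exponential damps them'' is only a heuristic, and where the limit exists it involves compressions of $\log A$ rather than of $A$, so this needs a real perturbation argument. The barrier estimate is also subtler than the commuting computation suggests: the blow-up of $D\log(X)(V)$ is paired against $\mathcal{L}(X,A_j)$, which itself degenerates at the boundary and whose kernel can align with exactly the directions in which $D\log(X)(V)$ blows up, so the sign of the product requires a two-sided estimate, not just the scalar mechanism $1-\sqrt{a_i/x_i}\to-\infty$. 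Your alternative double-minimisation route has the same defect: it derives stationarity conditions \emph{assuming} a joint minimiser exists. The paper sidesteps all boundary analysis with a compactness argument worth knowing: choose $\alpha,\beta>0$ with $\alpha I\le A_j\le\beta I$ for all $j$ and set $\mathcal{K}=\{X:\alpha I\le X\le\beta I\}$. Operator monotonicity of $\log$ forces the eigenvalues of $\tfrac12(\log X+\log A_j)$ into $[\log\alpha,\log\beta]$ for $X\in\mathcal{K}$, hence $\mathcal{L}(X,A_j)\in\mathcal{K}$, so $F(X)=\sum_j w_j\mathcal{L}(X,A_j)$ maps the compact convex set $\mathcal{K}$ into itself; Brouwer's fixed point theorem then gives a solution of \eqref{eq18a}, which by your task (a) is a critical point of $f$ and hence, by strict convexity, the unique minimiser and the unique solution of \eqref{eq18a}. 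If you want to keep your direct method, the two boundary lemmas must actually be proved, and that is real work; the fixed-point argument is the shorter path.
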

	
	\begin{proof}
	Start with the integral representation
	$$\log x=\int\limits_{0}^{\infty}\left(\frac{\lambda}{\lambda^2+1}-\frac{1}{\lambda+x}\right)\d\lambda,\ x>0.$$
	This shows that for all $X>0$ and all Hermitian $Y$ we have
	$$D(\log X)(Y)=\int\limits_{0}^{\infty}(\lambda+X)^{-1}Y(\lambda+X)^{-1}\d\lambda.$$
	For a fixed $A,$ let
	$$g(X)=\frac{1}{2}(\log A+\log X).$$
	Then
	\begin{equation}
	Dg(X)(Y)=\frac{1}{2}\int\limits_{0}^{\infty}(\lambda+X)^{-1}Y(\lambda+X)^{-1}\d\lambda.\label{eq46}
	\end{equation}
	The log Euclidean mean $\mathcal{L}(A,X)=\e^{g(X)}.$
	So, by the chain rule and Dyson's formula (see \cite{rbh} p. 311), we have
	$$D\mathcal{L}(A,X)(Y)=\int\limits_{0}^{1}\e^{(1-t)g(X)}Dg(X)(Y)\e^{tg(X)}\d t.$$
	This shows that 
	\begin{eqnarray*}
	D(\tr\mathcal{L}(A,X))(Y) & = & \tr\int\limits_{0}^{1}\e^{(1-t)g(X)}Dg(X)(Y)\e^{tg(X)}\d t\\
	 & = & \tr\left[\e^{g(X)}Dg(X)(Y)\right],
	\end{eqnarray*}
	using the cyclicity of trace.
	Using \eqref{eq46} and the cyclicity once again,
	we obtain
	\begin{eqnarray*}
	D(\tr\mathcal{L}(A,X))(Y) & = & \frac{1}{2}\tr\int\limits_{0}^{\infty}(\lambda+X)^{-1}\e^{g(X)}(\lambda+X)^{-1}Y\d\lambda\\
	 & = & \frac{1}{2}\tr\left(\int\limits_{0}^{\infty}(\lambda+X)^{-1}\mathcal{L}(A,X)(\lambda+X)^{-1}\d\lambda\right)Y.
	\end{eqnarray*}
	Hence, for the function 
	$$\Phi_4(A,X)=d_4^2(A,X)=\tr(A+X)-2\tr\mathcal{L}(A,X),$$
	we have
	\begin{align*}
	 & D\Phi_4(A,X)(Y)\\
	 &  = \tr\left( I-\int\limits_{0}^{\infty}(\lambda+X)^{-1}\mathcal{L}(A,X)(\lambda+X)^{-1}\d\lambda\right)Y.
	\end{align*}
	The objective function in \eqref{eq13} is 
	$$f(X)=\sum\limits_{j=1}^{m}w_j\Phi_4(A_j,X).$$
	So, we have
	\begin{align}
	 & Df(X)(Y)\nonumber\\
	 & = \tr\left(I-\int\limits_{0}^{\infty}(\lambda+X)^{-1}Z(\lambda+X)^{-1}\d\lambda\right)Y,\label{eq47}
	\end{align}
	where 
	$$Z=\sum\limits_{j=1}^{m}w_j\mathcal{L}(A_j,X).$$
	This shows that $Df(X)=0$ if and only if 
	\begin{equation}
	\int\limits_{0}^{\infty}(\lambda+X)^{-1}Z(\lambda+X)^{-1}\d\lambda=I.\label{eq48}
	\end{equation}
	Choose an orthonormal basis in which 
	$X=\diag(x_1,\ldots,x_n),$
	and let $Z=\begin{bmatrix}z_{ij}\end{bmatrix}$ in this basis.
	Then the condition \eqref{eq48} says that 
	$$\int\limits_{0}^{\infty}\frac{z_{ij}}{(\lambda+x_i)(\lambda+x_j)}\d\lambda=\delta_{ij} \textrm{  for all }i,j.$$
	This shows that $Z$ is diagonal, and 
	$$\frac{1}{z_{ii}}=\int\limits_{0}^{\infty}\frac{1}{(\lambda+x_i)^2}\d\lambda=\frac{1}{x_i}.$$
	Thus $X=Z=\sum\limits_{j=1}^{m}w_j\mathcal{L}(A_j,X),$ as claimed.
	\vskip.1in
	We should also show that the equation \eqref{eq18a} has a unique solution.
	Let $\alpha,\beta$ be positive numbers such that $\alpha I\le A_j\le \beta I$ for all $1\le j\le m.$
	Let $\mathcal{K}$ be the compact convex set $\mathcal{K}=\{X\in\mathbb{P}:\alpha I\le X\le \beta I\}.$
	The function $\log X$ is operator monotone.
	So for all $X$ in $\mathcal{K}$
	we have $\log\alpha I\le \log X\le \log\beta I.$
	Hence $\mathcal{L}(X,A_j)$ is in $\mathcal{K}$ for all $1\le j\le k.$
	This shows that the function $F(X)=\sum\limits_{j=1}^{m}w_j\mathcal{L}(X,A_j)$
	maps $\mathcal{K}$ into itself.
	By Brouwer's fixed point theorem $F$ has a unique fixed point $X$ in $\mathcal{K}.$
	This $X$ is a solution of \eqref{eq18a}
	and therefore must be unique.
	\end{proof}
	
	Finally, we remark that in the case of $d_1,$ the barycentre is given explicitly by the formula \eqref{eq13a}.
	For $d_2,$ $d_3,$ $d_4$ it has been given implicitly as solution of the equations
	\eqref{eq13b},\eqref{eq17r},\eqref{eq18a}, respectively.
	When $m=2$ and $w_1=w_2=1/2$, \rajendra{$w$ precised}
the solution of \eqref{eq13b} is the Wasserstein mean of $A_1$ and $A_2$ defined as
	$$\frac{1}{4}\left(A_1+A_2+(A_1A_2)^{1/2}+(A_2A_1)^{1/2}\right).$$
	See \cite{bjl}.
	\vskip.2in
	
	{\bf Acknowledgements}: The authors thank F. Hiai and S. Sra for helpful comments and references,
	and the anonymous referee for a careful reading of the manuscript.
	The first author is grateful to INRIA and \'Ecole polytechnique, Palaiseau
	for visits that facilitated this work, and to CSIR(India) for the award of a Bhatnagar Fellowship.
	
	\vskip.3in

\appendix 
\section{Proof of \Cref{lem-new}}\label{appendixa}
We make a variation of the proof of Theorem 3.12 in \cite{bb97}, 
dealing with a related problem (the minimisation of $\Phi$ over a closed convex set). 

Since $\varphi$ is of Legendre type, Theorem 3.7(iii) of \cite{bb97} shows that
for all $a\in\interior\dom \varphi$, the map $x\mapsto \Phi(x,a)$ is coercive, meaning that $\lim_{\|x\|\to\infty} \Phi(x,a)=+\infty$. A sum of coercive functions is coercive, and so the map 
\[
\Psi(x):= \sum_{j=1}^m\frac{1}{m}\Phi(x,a_j)
\]
is coercive. The infimum of a coercive lower-semicontinuous function on a closed non-empty set is attained, so there is an element
$\bar{x}\in \closure\interior \dom \varphi$ such that $\inf_{x\in \closure\interior \dom \varphi} \Phi(x)=\Phi(\bar{x})<+\infty$. 
Suppose that $\bar{x}$ belongs to the boundary of $\interior\dom \varphi$.
Let us fix an arbitrary $z\in\interior\dom \varphi$, and
let $g(t) :=\Psi((1-t)\bar{x}+t z)$, defined for $t\in [0,1)$. We have
\[
g'(t)=\langle \nabla \varphi((1-t)\bar{x}+t z)-\sum_{j=1}^m \frac{1}{m}
\nabla\varphi(a_j), z-\bar{x}\rangle \enspace .
\]
Using property~\eqref{itdef-iv} of the definition of Legendre type functions,
we get that $\lim_{t\to 0^+}g'(t)=-\infty$, which entails that $g(t)<g(0)=\Psi(\bar x)$ for $t$ small enough. Since $(1-t)\bar{x}+t z\in \interior\dom \varphi$
for all $t\in(0,1)$, this contradicts the optimality of $\bar x$. So
$\bar x\in \interior\dom \varphi$, which proves \Cref{lem-new}.

\section{Examples}\label{appendixb}
In the last statement of \Cref{thm3}, dealing with tracial convex
functions, we required $\varphi$ to be differentiable
and strictly convex on $\mathbb{P}$.
In the second statement, dealing with the non tracial case,
we made a stronger assumption, requiring $\varphi$ to be of Legendre
type. We now give an example showing that the Legendre condition cannot be dispensed with.
To this end, it is convenient to construct first an example 
showing the tightness of~\Cref{lem-new}.

\subsection*{Need for the Legendre condition in \Cref{lem-new}}

Let us fix $N>3$, let $e=(1,1)^\top\in\R^2$, 
\begin{align}
L=\left(\begin{array}{cc}
N-1 & -2 \\ -2 & N-1
\end{array}\right)\label{e-def-L}
\end{align}
and consider the affine transformation $g(x)=e+Lx$. 
Let $ a =
(N,0)^\top$,
$ b = 
(0 , N)^\top$,
and
\[
\bar a:= g^{-1}( a)= \frac{1}{N^2-2N-3}\left(\begin{array}{c} N^2-2N-1 \\ N-1
\end{array}
\right),
\]
\[
\bar b:=g^{-1}(b)= \frac{1}{N^2-2N-3}
\left(\begin{array}{c}N-1 \\ N^2-2N-1
\end{array}
\right) \enspace .
\]
Observe that $\bar a, \bar b\in\positivereals^2$ since $N>3$. 

Consider now, for $p>1$, the map $\varphi(x):=\|x\|_p^p=|x_1|^p+|x_2|^p$ defined
on $\R^2$ and $\bar\varphi(x)=\varphi(g(x))$. Observe that $\varphi$
is strictly convex and differentiable. Let $\bar\Phi$
denote the Bregman divergence associated with $\bar\varphi$,
and let $\bar\Psi(x):= \frac{1}{2}(\bar\Phi(x,\bar a)+\bar\Phi(x,\bar b))$. We claim that
$0$ is the unique point of minimum of $\bar\Psi$ over $\nonnegativereals^2$.
Indeed, 
\begin{align*}
\nabla\bar\Psi(x)&=L^\top(\nabla\varphi(g(x)))-\frac{1}2
\Big(L^\top(\nabla\varphi( a))+
L^\top(\nabla\varphi( b))\Big)\enspace ,
\end{align*}
from which we get
\begin{align*}
\nabla\bar\Psi(0)&= L(p(1 -N^{p-1}/2) e)=(N-3)p(1-N^{p-1}/2)e \enspace .
\end{align*}
It follows that $\nabla\bar\Psi(0)\in \positivereals^2$ 
if $p>1$ is chosen close enough to $1$, so that $1-N^{p-1}/2>0$.
Then, since $\bar\Psi$ is convex, we have
\begin{align}
\bar\Psi(x)-\bar\Psi(0)\geq \langle \nabla \bar\Psi(0),x\rangle>0,
\qquad \text{ for all } x\in \nonnegativereals^2\setminus\{0\}
\label{e-strict}
\end{align}
showing the claim. 

Consider now the modification $\hat{\varphi}$ of $\bar{\varphi}$, 
so that $\hat{\varphi}(x)=\bar{\varphi}(x)$ for $x\in \nonnegativereals^2$,
and $\hat{\varphi}(x)=+\infty$ otherwise. The function $\hat{\varphi}$
is strictly convex, lower-semicontinuous, and differentiable on the interior of its domain, but not of Legendre type, and the conclusion of \Cref{lem-new} does not apply to it. 

The geometric intuition leading to this example is described in the figure.%
\stephane{added last sentence}
\begin{figure}
\begin{center}
\begin{tikzpicture}[scale=6.5,convex/.style={fill=lightgray,fill opacity=0.50},convexborder/.style={very thick}]
\node at (0.055,0.055) {
\includegraphics[width=19.5mm,height=19.5mm]{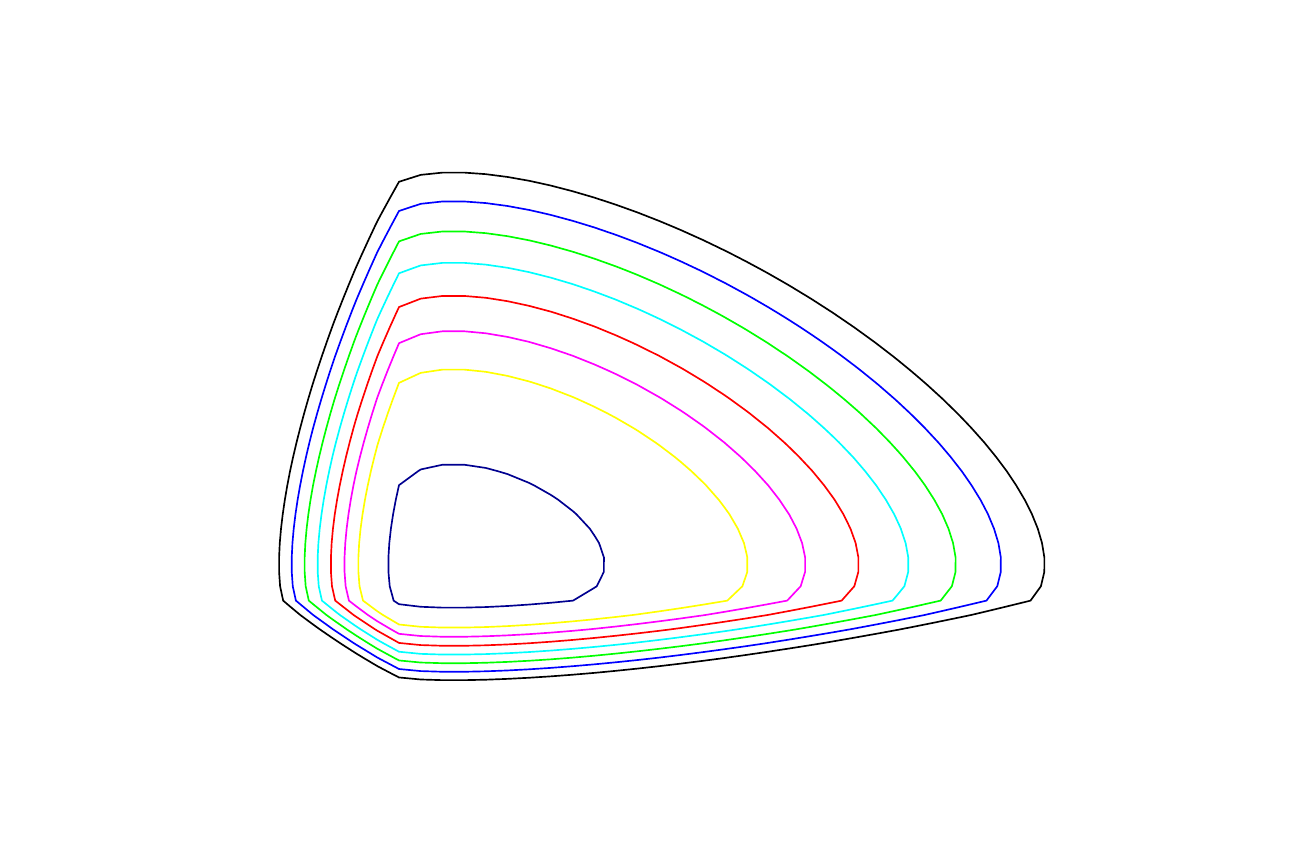}
};
\draw[gray!60, ultra thin,step=0.1] (-0.1,-0.1) grid (0.5,0.5);
\draw (0.4,0) node[right] {$a$};
\filldraw (0.4,0) circle (0.06ex);
\draw (0,0.4) node[right] {$b$};
\filldraw (0,0.4) circle (0.06ex);
\draw (0.1,0.1) node[right] {$e$};
\filldraw (0.0125,0.0125) circle (0.06ex);
\draw (0.0125,0.0125) node[right] {$u$};
\draw (0.3,0.3) node[right] {$C$};
\filldraw (0.1,0.1) circle (0.06ex);
\filldraw[gray,opacity=0.3] (0.5,-0.333333*0.1) -- (0.1,0.1) -- (-0.333333*0.1,0.5) -- (0.5,0.5)-- cycle;
\end{tikzpicture}
\end{center}\captionsetup{labelformat=empty}
\caption{The example illustrated. The point $u$ is the unconstrained minimum of the sum of Bregman divergences $\Psi(x):=\Phi(x,a)+\Phi(x,b)$ associated with $\varphi(x)=x_1^p+x_2^p$, here $p=1.2$. Level curves of $\Psi$ are shown. The minimum of $\Psi$ on the simplicial cone $C$ is at the unit vector $e$. An affine change of variables sending $C$ to the standard quadrant, and a lift to the cone of positive semidefinite matrices leads to \Cref{th-cex}}\label{fig-explain}
\end{figure}

\subsection*{Need for the Legendre condition in \Cref{thm3}}
We next construct an example showing that the Legendre
condition in the second statement of \Cref{thm3} cannot be dispensed
with. Observe that the inverse of the linear operator $L$ in~\eqref{e-def-L}
is given by
\[
L^{-1} = 
\frac{1}{N^2-2N-3}\left(\begin{array}{cc}
N-1 & 2 \\ 2 & N-1
\end{array}\right) \enspace.
\]
In particular, it is a nonnegative matrix. 

We set $\tau=\left(\begin{smallmatrix}0&1\\1& 0\end{smallmatrix}\right)$,
and consider the ``quantum'' analogue of $L$, i.e.,
\[
T(X)=(N-1)X- 2 \tau X \tau \enspace .
\]
Then,
\[
T^{-1}(X)= \frac{1}{N^2-2N-3}\big((N-1)X+ 2\tau X\tau\big)
\]
is a completely positive map leaving $\mathbb{P}$
invariant. The analogue of the map $g$ is
\[
G(X) =I + T(X)
\]
where $I$ denotes the identity matrix. 

We now consider the map $\varphi(X):= \|X\|_p^p=\operatorname{tr}(|X|^p)$
defined on the space of Hermitian matrices. The function $\varphi$ is differentiable and strictly convex, still assuming that $p>1$.
We set $\bar A:= \diag(\bar a)\in \mathbb{P}$, $\bar B:=\diag(\bar b)\in \mathbb{P}$, and now define $\bar\Phi$ to be the Bregman divergence associated
with $\bar\varphi:= \varphi\circ G$. Let
\[
\bar\Psi(X):= \frac{1}{2}\Big(
\bar\Phi(X,\bar A)+ \bar\Phi(X,\bar B) 
\Big) \enspace .
\]
We then have the following result.
\begin{prop}\label{th-cex}
The minimum of the function $\bar\Psi$ on the closure of $\mathbb{P}$ is achieved at point $0$. Moreover, the equation
\begin{align}
\nabla \bar{\varphi}(X)=\frac{1}{2}(\nabla \bar{\varphi}(\bar{A})
+ \nabla\bar{\varphi}(\bar{B}))\label{e-nosolution}
\end{align}
has no solution $X$ in $\mathbb{P}$. 
\end{prop}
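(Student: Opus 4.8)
The plan is to observe that the matrix construction restricts, on diagonal matrices, to precisely the scalar construction already treated, and then to lift the scalar conclusion to the whole positive semidefinite cone by one gradient computation at the origin together with the convexity of $\bar\Psi$.

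First I would record the reduction to the diagonal case. For a diagonal $X=\diag(x_1,x_2)$ one has $\tau X\tau=\diag(x_2,x_1)$, so on the diagonal $T$ acts as $L$ acts on the vector of diagonal entries, $G$ acts as $g$, and $\varphi(X)$ equals the scalar $\varphi(x)$. Hence $\bar\varphi$, $\bar\Phi$ and $\bar\Psi$ restrict on diagonal matrices to their scalar analogues, and $\bar A=\diag(\bar a)$, $\bar B=\diag(\bar b)$ are the diagonal lifts of the scalar data; this makes the earlier scalar computation directly available.

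Next I would compute $\nabla\bar\Psi(0)$. Since $\bar\varphi=\varphi\circ G$, the linear part of $G$ is $T$, and $T$ is self-adjoint for the trace inner product (because $X\mapsto\tau X\tau$ is), the chain rule gives $\nabla\bar\varphi(X)=T(\nabla\varphi(G(X)))$; and since $\bar\Phi(\cdot,Y)$ differs from $\bar\varphi$ by an affine term, $\nabla\bar\Psi(X)=\nabla\bar\varphi(X)-\frac{1}{2}\bigl(\nabla\bar\varphi(\bar A)+\nabla\bar\varphi(\bar B)\bigr)$. At $X=0$ I use $G(0)=I$, $\nabla\varphi(I)=pI$, and $T(I)=(N-3)I$; at $\bar A$ and $\bar B$ I use that $G(\bar A)=\diag(N,0)$ and $G(\bar B)=\diag(0,N)$ are diagonal, so $\nabla\varphi$ there is diagonal and $T$ preserves diagonal matrices. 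Combining the three contributions by the same arithmetic as in the scalar case yields $\nabla\bar\Psi(0)=p(N-3)\bigl(1-N^{p-1}/2\bigr)I$, which for $p>1$ close to $1$ is a strictly positive multiple $cI$ of the identity. The point needing care here is that $G(\bar A)$ is singular; but for $p>1$ the scalar map $t\mapsto|t|^p$ is $C^1$, so $\varphi$ is differentiable with continuous gradient even at singular Hermitian matrices, and the diagonal evaluation is legitimate.

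Finally I would close both assertions by convexity. The function $\bar\varphi$ is convex (indeed strictly convex, being a strictly convex $\varphi$ precomposed with the invertible affine $G$), hence so is $\bar\Psi$. For any $X$ in $\closure\mathbb{P}$ the subgradient inequality gives $\bar\Psi(X)\ge\bar\Psi(0)+\langle\nabla\bar\Psi(0),X\rangle=\bar\Psi(0)+c\,\tr X$; since $c>0$ and $\tr X>0$ for every nonzero $X\in\closure\mathbb{P}$, the right-hand side is strictly larger than $\bar\Psi(0)$ unless $X=0$, proving that the minimum over $\closure\mathbb{P}$ is attained uniquely at $0$. For the second assertion, equation~\eqref{e-nosolution} is exactly the critical-point equation $\nabla\bar\Psi(X)=0$; a solution $X\in\mathbb{P}$ would be a critical point of the convex $\bar\Psi$, hence a global minimiser and a fortiori a minimiser over $\closure\mathbb{P}$, forcing $\bar\Psi(X)\le\bar\Psi(0)$ and contradicting the strict estimate just obtained (since $0\notin\mathbb{P}$). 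The only genuine obstacle is the gradient computation of the third paragraph, and in particular the justification of differentiability at the singular matrix $G(\bar A)$; everything else is a transcription of the scalar argument followed by a one-line positive-cone estimate.
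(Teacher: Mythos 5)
Your proposal is correct and follows essentially the same route as the paper: compute $\nabla\bar\Psi(0)=(N-3)p\bigl(1-N^{p-1}/2\bigr)I$, use convexity and the gradient inequality to conclude that $0$ is the unique minimiser on $\closure\mathbb{P}$, and then rule out a solution of~\eqref{e-nosolution} in $\mathbb{P}$ because it would be a critical point, hence a minimiser, contradicting uniqueness. The only cosmetic difference is at the singular matrix $G(\bar A)$: where you invoke the $C^1$ property of $t\mapsto|t|^p$ to differentiate $\tr|X|^p$, the paper cites norm-derivative results of Aiken--Erdos--Goldstein and Abatzoglou to obtain the same gradient formula $\nabla\varphi(X)=pX^{p-1}$ for diagonal positive semidefinite $X$.
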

\begin{proof}
From~\cite{aikenerdosgoldstein} (Theorem 2.1) or \cite{abatzoglou} (Theorem 2.3),
we have
\[
\frac{d}{dt}\mid_{t=0} \operatorname{tr}|X+tY|^p = p \operatorname{Re} \operatorname{tr}|X|^{p-1}U^*Y
\]
where $X=U|X|$ is the polar decomposition of $X$. In particular, 
if $X$ is diagonal and positive semidefinite,
\[
\nabla \varphi(X) = pX^{p-1} \enspace .
\]
Then, by a computation similar to the one in the scalar case above,
we get
\[
\nabla \bar\Psi(0) = (N-3)p(1-N^{p-1}/2)I \in \mathbb{P} \enspace.
\]
We conclude, as in~\eqref{e-strict}, that 
\[
\bar\Psi(X)-\bar\Psi(0)\geq \langle \nabla \bar\Psi(0),X\rangle>0,
\qquad \text{ for all } X\in \closure \mathbb{P}\setminus\{0\} \enspace,
\]
where now $\langle \cdot,\cdot\rangle$ is the Frobenius scalar
product on the space of Hermitian matrices.
It follows that $0$ is the unique point of minimum of $\bar\Psi$
on $\closure\mathbb{P}$. 

Moreover, if the equation~\eqref{e-nosolution} had a solution
$X\in\mathbb{P}$, the first order optimality condition
for the minimisation of the function $\bar\Psi$ over
$\mathbb{P}$ would be satisfied, showing that $\bar\Psi(Y)\geq \bar\Psi(X)$
for all $X\in \mathbb{P}$, and by density, $\bar\Psi(0)\geq \bar\Psi(X)$,
contradicting the fact that $0$ is the unique point of minimum of $\bar\Psi$
over $\closure \mathbb{P}$.
\end{proof}
\vskip.2in

\noindent{\it\bf Note added to the second version}:
In the earlier version of this paper posted on January 5, 2019
that appeared in Letters in Mathematical Physics, 109, (2019) 1777-1804, ,
we made an unfortunate error.
Theorem 9 in that version wrongly claimed that for the case $d=d_3$
the solution of the minimisation problem \eqref{eq13}
is also the solution of the matrix equation \eqref{eq13r}.
The mistake in the statement and in the proof has been pointed in
J. Pitrik and D. Virosztek, Quantum Hellinger distances revisited, arXiv: 1903.10455v3.
In this paper some more general divergence functions are considered,
the barycentre equations are derived,
and an example is given to show that
the solution to the matrix equations \eqref{eq17r} and \eqref{eq13r}
need not be the same.

\todo{fix the date.}
\hfill{January 4, 2019}


\begin{thebibliography}{99}

\bibitem{abatzoglou}
T.J. Abatzoglou, {\it Norm derivatives on spaces of operators},
Math. Ann., 239 (1979), 129-135. \stephane{added ref}

\bibitem{ac} M. Agueh and G. Carlier, {\it Barycenters in the Wasserstein space}, SIAM J. Math. Anal. Appl. 43 (2011), 904-924.

\bibitem{aikenerdosgoldstein}
J.G. Aiken, J.A. Erdos, J.A. Goldstein
{\it Unitary approximation of positive operators},
Illinois J. Math., 24 (1980), 61-72. \stephane{added ref}

\bibitem{am}
S. Amari, {\it Information Geometry and its Applications}, Springer (Tokyo), 2016.

\bibitem{ando} T. Ando, {\it Concavity of certain maps on positive definite matrices and applications to Hadamard products}, 
Linear Algebra Appl. 26 (1979), 203-241.
\rajendra{ref added}

\bibitem{alm} T. Ando, C.-K. Li and R. Mathias, {\it Geometric means}, Linear
Algebra Appl. 385 (2004), 305-334.

\bibitem{xyz} V. Arsigny, P. Fillard, X. Pennec and N. Ayache, {\it Geometric 
means in a novel vector space structure on symmetric positive-definite 
matrices}, SIAM J. Math. Anal. Appl. 29 (2007), 328-347.

\bibitem{bmdg}
A. Banerjee, S. Merugu, I. S. Dhillon and J. Ghosh, {\it Clustering with Bregman divergences}, J. Mach. Learn. Res. 6 (2005), 1705-1749.

\bibitem{ba} F. Barbaresco, {\it Innovative tools for radar signal processing 
based on Cartan's geometry of SPD matrices and information geometry}, IEEE Radar 
Conference, Rome, May 2008.


\bibitem{bb97}
H. H. Bauschke and J. M. Borwein, {\it Legendre functions and the method
of random Bregman projections}, J. of Convex Anal. 4(1997), 27-67.
 \rajendra{ref added}

\bibitem{bb}H. H. Bauschke and J. M. Borwein, {\it Joint and separate convexity of
 the Bregman distance}, Stud. Comput. Math. 8 (2001), 23-36.
 


 
\bibitem{bz} I. Bengtsson and K. Zyczkowski, {\it Geometry of Quantum States: An Introduction to Quantum Entanglement}, Cambridge University Press, 2006.

\bibitem{bs} K. V. Bhagwat and R. Subramanian, {\it Inequalities between means 
of positive operators}, Math. Proc. Camb. Phil. Soc. 83 (1978), 393-401.

\bibitem{rbh} R. Bhatia, {\it Matrix Analysis,} Springer, 1997.

\bibitem{rbh1} R. Bhatia, {\it Positive Definite Matrices,} Princeton
University Press, 2007.

\bibitem{rbh3} R. Bhatia, {\it The Riemannian mean of positive matrices}, in Matrix Information Geometry, eds. F. Nielsen and R. Bhatia, Springer, (2013), 35-51.

\bibitem{bg} R. Bhatia and P. Grover, {\it Norm inequalities related to the 
matrix geometric mean}, Linear Algebra Appl. 437 (2012), 726-733.

\bibitem{bjl}
R. Bhatia, T. Jain and Y. Lim , {\it On the Bures-Wasserstein
distance between positive definite matrices}, Expos. Math., to appear.

\bibitem{bjl2}
R. Bhatia, T. Jain and Y. Lim, {\it Strong convexity of sandwiched entropies and related optimization problems}, Rev. Math. Phys. 30 (2018), 1850014.


\bibitem{cl1}
E. A. Carlen and E. H. Lieb, {\it A Minkowski type trace inequality and strong subadditivity of quantum entropy},
Advances in the Mathematical Sciences, AMS Transl. 180 (1999), 59-68.

\bibitem{cl2}
E. A. Carlen and E. H. Lieb, {\it A Minkowski type trace inequality and strong subadditivity of quantum entropy. II. Convexity and concavity},
Lett. Math. Phys. 83 (2008), 107-126.

\bibitem{chebbimoakher}
Z. Chebbi and M. Moakher, 
{\it Means of Hermitian positive-definite matrices based on the log-determinant $\alpha$-divergence function}, Linear Algebra Appl. 436 (2012), 1872–1889.


\bibitem{dt}
I. S. Dhillon and J. A. Tropp, {\it Matrix nearness problems with Bregman divergences}, SIAM J. Matrix Anal. Appl. 29 (2004), 1120-1146.

\bibitem{fj} P. Fletcher and S. Joshi, {\it Riemannian geometry for the 
statistical analysis of diffusion tensor data}, Signal Processing 87 (2007), 
250-262.


\bibitem{hbmp}
F. Hiai, M. Mosonyi, D. Petz and C. Beny,
{\it Quantum f-divergences and error correction}, Rev. Math. Phys. 23 (2011), 691-747.

\bibitem{j} A. Jencov\' a, {\it Geodesic distances on density matrices}, J. Math. Phys. 45 (2004), 1787-1794.

\bibitem{jr}
A. Jencova and M. B. Ruskai, {\it A unified treatment of convexity of relative entropy and related trace functions
with conditions for equality}, Rev. Math. Phys. 22 (2010), 1099-1121.


\bibitem{mo}
K. Modin, {\it Geometry of matrix decompositions seen through optimal transport and information geometry}, J. Geom. Mech. 9 (2017), 335-390.
 
\bibitem{n} F. Nielsen and R. Bhatia, eds., {\it Matrix Information Geometry,} Springer, 2013.

\bibitem{nb} F. Nielsen and S. Boltz, {\it The Burbea-Rao and Bhattacharyya centroids}, IEEE Transactions on Information Theory 57 (2011), 5455-5466.

\bibitem{pv} J. Pitrik and D. Virosztek, {\it On the joint convexity of the Bregman divergence of matrices}, Lett. Math. Phys. 105 (2015), 675-692.


\bibitem{pw} W. Pusz and S. L. Woronowicz, {\it Functional calculus for
sesquilinear forms and the purification map}, Rep. Math. Phys. 8 (1975), 
159-170.

\bibitem{rockafellar}
R.~T. Rockafellar.
\newblock {\em Convex Analysis}.
\newblock Princeton University Press, 1970.

\bibitem{rockafellarandwets}
R.~T. Rockafellar and R.~J-B. Wets.
\newblock {\em Variational Analysis}.
\newblock Springer, 1998.

\bibitem{sra} 
S. Sra, {\it Positive definite matrices and the $S$-divergence},
Proc. Amer. Math. Soc. 144 (2016), 2787-2797.

\bibitem{t} A. Takatsu, {\it Wasserstein geometry of Gaussian measures}, Osaka J. Math. 48 (2011), 1005-1026.

\end{thebibliography}
\end{document}